\documentclass[12pt]{article}
\usepackage{arxiv}
\usepackage{amsmath,amssymb,amsfonts,amsthm}
\usepackage[ruled,vlined]{algorithm2e}
\usepackage{graphicx}
\usepackage{float}
\usepackage{xcolor}
\usepackage{enumitem}
\usepackage{url}
\usepackage{tikz}
\usetikzlibrary{arrows.meta, positioning, calc, fit, backgrounds}

\theoremstyle{plain}
\newtheorem{thm}{Theorem}

\newtheorem{prop}[thm]{Proposition}
\newtheorem{defn}{Definition}
\newtheorem{rem}{Remark}
\newtheorem{example}{Example}

\title{\textbf{Explicit Factorization of $X^n-1$ over $\mathbb{Z}_{p^e}$ via Cofactor-Free Single-Seed Hensel Lifting}}

\author{
    Yongchao Wang$^{1}$, Yang Ding$^{2}$, Jiansheng Yang$^{2}$, and Zhiqiu Huang$^{1}$ \\[0.5em]
    $^{1}$\textit{College of Computer Science and Technology, Nanjing University of Aeronautics and Astronautics, Nanjing, China} \\
    \texttt{\{wangyc, zqhuang\}@nuaa.edu.cn} \\[0.5em]
    $^{2}$\textit{Department of Mathematics, Shanghai University, Shanghai 200444, China} \\
    \texttt{dingyang@shu.edu.cn, yjsyjs@staff.shu.edu.cn}
}

\date{\today}

\begin{document}

\maketitle

\begin{abstract}
We present a complete framework for the explicit factorization of $X^n-1$ over integer residue rings $\mathbb{Z}_{p^e}$ for arbitrary $n$ with $\gcd(n, p)=1$. Classical approaches face fundamental bottlenecks: polynomial Hensel lifting requires updating global cofactors (scaling with $n$), while direct multivariate Newton--Hensel iteration on the factor coefficients requires Jacobian inversion (scaling exponentially as $O(p^{(m-1)^2})$ per layer due to zero-divisors, where $m$ is the coset dimension). Our framework eliminates both bottlenecks through three contributions: (1)~the \emph{Ideal Derivation Modulo Principle}, which characterizes all factor coefficients as roots of a multivariate Dickson polynomial ideal derived via modular remainder extraction; (2)~a \emph{cofactor-free Hensel lift} that elevates a single seed factor from $\mathbb{F}_p$ to $\mathbb{Z}_{p^e}$ using a cached polynomial inverse computed once over $\mathbb{F}_p$; and (3)~a \emph{dual-track coefficient reconstruction} mechanism that recovers all remaining factors from the lifted seed's trace array via MED-based coset dispatch, with Newton--Girard inversion as the primary path and quotient-ring Gaussian elimination as an unconditional fallback when $p \leq m$. Empirical evaluation confirms the theoretical grand total algebraic complexity of $O(n + m^3 \log p + e \cdot m^2)$ for explicitly factoring $X^n-1$ over $\mathbb{Z}_{p^e}$, validating the near-constant per-layer lifting cost $O(m^2)$ to depths exceeding $e = 1000$. The framework yields speedups of $445\times$ (including runtime auto-seeding overhead) over SageMath's C-backed FLINT/Pari engine and $33.5\times$ over the V1 scalar lift.
\end{abstract}

\section{Introduction}

The factorization of $X^n-1$ over integer residue rings $\mathbb{Z}_{p^e}$ is a fundamental operation that underpins three pillars of contemporary applied algebra: post-quantum cryptography, fully homomorphic encryption, and algebraic coding theory.

\textbf{Post-Quantum Cryptography.}
In August 2024, NIST finalized its first post-quantum cryptography standards: FIPS~203 (ML-KEM, derived from CRYSTALS-Kyber) and FIPS~204 (ML-DSA, derived from CRYSTALS-Dilithium) \cite{kyber_paper, dilithium_paper}. Both standards operate over quotient rings $\mathbb{Z}_q[X]/(X^n+1)$ with $n=256$ and employ the Number Theoretic Transform (NTT) to reduce polynomial multiplication from $O(n^2)$ to $O(n \log n)$. The NTT requires the modulus $q$ to admit a primitive $2n$-th root of unity, which is equivalent to requiring that $X^{2n}-1$ splits completely modulo $q$. Today's standards restrict $n$ to powers of two and $q$ to a single prime, but extending to composite moduli $q = p^e$ or general cyclotomic orders $n$ would enable richer algebraic structures for next-generation lattice schemes---precisely the regime addressed by our framework.

\textbf{Fully Homomorphic Encryption.}
The BGV and BFV fully homomorphic encryption schemes \cite{brakerski2014leveled, Smart2014Fully} achieve SIMD-style parallel computation on encrypted data through \emph{plaintext slot packing}. The plaintext space $\mathbb{Z}_t[X]/\Phi_m(X)$ decomposes via the Chinese Remainder Theorem into $\ell$ independent ``slots,'' one per irreducible factor of $\Phi_m(X)$ modulo $t$. Maximizing $\ell$ (and hence throughput) demands the complete factorization of $\Phi_m(X)$---a constituent of $X^n-1$---over the plaintext ring. When the plaintext modulus is a prime power $t = p^e$, this factorization must be performed over $\mathbb{Z}_{p^e}$, yet no general-purpose computer algebra system currently supports this operation (as confirmed by our benchmarks in Section~5, where SageMath raises \texttt{NotImplementedError} for all $e > 5$). Our framework fills this gap, enabling slot-packing parameter exploration at arbitrary precision depths.

\textbf{Algebraic Coding Theory.}
Cyclic codes of length $n$ over $\mathbb{Z}_{p^e}$ correspond to ideals in the quotient ring $\mathbb{Z}_{p^e}[X]/\langle X^n-1 \rangle$. The complete irreducible factorization of $X^n-1$ over $\mathbb{Z}_{p^e}$ determines the code's generator polynomial structure: each subset of irreducible factors defines a distinct cyclic code, and the pairwise coprimality of these factors is essential for constructing idempotent generators and for CRT-based encoding and decoding architectures. For finite chain rings, this factorization also governs the Lee and homogeneous distance properties that are critical for codes designed for phase-shift keying (PSK) modulation in modern communication systems. Despite the practical importance, the algebraic tools for performing this factorization efficiently have remained underdeveloped, motivating the present work.

Over finite fields $\mathbb{F}_p$, the factorization of $X^n-1$ is well understood. However, lifting these factorizations to $\mathbb{Z}_{p^e}$ via classical methods encounters severe performance walls. Polynomial Hensel lifting requires maintaining global cofactors of degree $O(n)$, causing high per-layer costs. Alternatively, directly lifting the $m$ coefficients of a factor via multivariate Newton--Hensel requires inverting an $m \times m$ Jacobian matrix at each precision layer. Over $\mathbb{Z}_{p^e}$, zero-divisors obstruct this matrix inversion, forcing exhaustive searches whose cost scales exponentially as $O(p^{(m-1)^2})$ per layer, where $m$ is the coset dimension.

\subsection{Prior Work on Explicit Factorizations}
The explicit determination of irreducible polynomial factors has been extensively explored within finite fields $\mathbb{F}_q$, primarily for individual cyclotomic polynomials $\Phi_d(X)$---which are factors of $X^n-1$ via the decomposition $X^n - 1 = \prod_{d \mid n} \Phi_d(X)$ (see Definition~\ref{def:cyclotomic_cosets}), but are not identical to $X^n-1$ itself.

Fitzgerald and Yucas \cite{Fitzgerald2007} achieved explicit factorizations of the cyclotomic polynomial $\Phi_{2^n \cdot 3}(x)$ over $\mathbb{F}_q$ by projecting solutions onto one-dimensional Dickson scalar symmetries. In our framework, the quadratic irreducible factors arising from $\Phi_{2^n \cdot 3}(x)$ correspond to coset dimension $m = 2$ with norm $P = 1$; the Ideal Derivation Modulo Principle (Theorem~\ref{thm:ideal_modulo}) specializes to the scalar constraint $E_p(S, 1) = 0$, whose roots are precisely the trace values identified in their work.

Wang and Wang \cite{Wang2012} resolved the cyclotomic polynomial $\Phi_{2^n \cdot 5}(x)$ over $\mathbb{F}_q$ through manually derived systems of nonlinear recurrence equations governing the factor coefficients. When Theorem~\ref{thm:ideal_modulo} is evaluated for the coset dimensions arising from $\Phi_{2^n \cdot 5}(x)$ (specifically $m \leq 4$), the remainder coordinates $V_i(\mathbf{A}) = 0$ produce exactly the same coupled polynomial system, confirming that their results are a fixed-dimensional specialization of our general framework.

More recently, Brochero Mart{\'{i}}nez and Ar{\'{e}}valo Baquero \cite{Brochero2019} characterized the irreducible factors of Dickson polynomials directly, though under Kummer extension constraints requiring that every prime divisor of $n$ divides $q-1$.

While these formulations are algebraically significant, they face fundamental limitations. First, they address individual cyclotomic polynomials $\Phi_d(X)$ for specific $d$, relying on low-dimensional symmetry that breaks down for arbitrary orders where mixed-degree coset structures arise. Second, they operate exclusively over fields $\mathbb{F}_q$; lifting their results to integer residue rings $\mathbb{Z}_{p^e}$ immediately incurs the exponential Jacobian barrier associated with multivariate coordinate inversion. Our framework handles the complete factorization of $X^n - 1$ (including all constituent $\Phi_d(X)$ simultaneously) over $\mathbb{Z}_{p^e}$ for arbitrary coprime $(n, p)$.

In our foundational work (V1) \cite{Wang2026ISIT}, we achieved explicit factorization of $X^{p+1}-1$ over $\mathbb{Z}_{p^e}$ by projecting the factorization onto a single Dickson polynomial $V(x)$, enabling Jacobian-free Hensel lifting. However, this derivation was restricted to the specific case $n = p+1$ with coset dimension $m = 2$.

\subsection{Contribution of this Framework}
This paper extends explicit cyclotomic factorization to arbitrary orders $n$ over $\mathbb{Z}_{p^e}$ through three components. First, the Multivariate Dickson Recurrence generates all factor coefficients from a single seed polynomial, combined with the Multiple Equal-Difference (MED) decomposition \cite{Zhu_MED} that partitions cyclotomic cosets for systematic trace dispatch. Second, the Ideal Derivation Modulo Principle (Theorem~\ref{thm:ideal_modulo}) provides the theoretical foundation linking the factorization coordinates to explicit polynomial ideals. Third, the Cofactor-Free Hensel Lift (Theorem~\ref{thm:cofactor_free}) enables independent per-factor precision elevation without Jacobian matrix inversion. Through inverse caching and fast modular exponentiation, the lifting cost per layer is reduced to $O(m^2)$---rendering it negligible relative to the base field seed extraction, and yielding an overall empirical base field complexity of $O(n + m^3 \log p)$ that is practically independent of the precision depth $e$.
\section{Preliminaries: Structural Polynomial Configurations}
We begin by defining the core polynomial recurrences that underpin the factorization framework.

\begin{defn}[Dickson Polynomials] \cite{Lidl1993} \label{def:dickson_poly}
Let $a \in \mathbb{F}_q$. The Dickson polynomials of the first kind $D_k(x, a)$ and the second kind $E_k(x, a)$ of degree $k$ are explicitly formulated through the symmetric parameterization of roots $x = y + a/y$:
\begin{align}
    D_k(x, a) &= y^k + \left(\frac{a}{y}\right)^k = \sum_{i=0}^{\lfloor k/2 \rfloor} \frac{k}{k-i} \binom{k-i}{i} (-a)^i x^{k-2i} \\
    E_k(x, a) &= \frac{y^{k+1} - (a/y)^{k+1}}{y - a/y} = \sum_{i=0}^{\lfloor k/2 \rfloor} \binom{k-i}{i} (-a)^i x^{k-2i}
\end{align}
Both $D_k$ and $E_k$ satisfy the common three-term recurrence $P_{k} = x\, P_{k-1} - a\, P_{k-2}$, differing only in their initial conditions: $(D_0, D_1) = (2, x)$ and $(E_0, E_1) = (1, x)$. Because this recurrence involves only multiplication and addition, it operates safely over integer rings $\mathbb{Z}_{p^e}$ without encountering zero-divisor issues.
\end{defn}

\begin{defn}[$q$-Cyclotomic Cosets and the Factorization of $X^n-1$] \cite{MacWilliams1977} \label{def:cyclotomic_cosets}
Let $n$ be a positive integer with $\gcd(n, q) = 1$. The \textbf{$q$-cyclotomic coset} of an element $s \in \mathbb{Z}_n$ is the set
\begin{equation}
    C_s = \{s, \, qs, \, q^2 s, \, \dots, \, q^{m_s - 1} s\} \pmod{n}
\end{equation}
where $m_s = |C_s|$ is the smallest positive integer satisfying $q^{m_s} s \equiv s \pmod{n}$. The collection of all distinct $q$-cyclotomic cosets partitions the index set $\mathbb{Z}_n = \{0, 1, \dots, n-1\}$.

The polynomial $X^n - 1$ decomposes as the product of all cyclotomic polynomials whose index divides $n$:
\begin{equation}
    X^n - 1 = \prod_{d \mid n} \Phi_d(X)
\end{equation}
where each $\Phi_d(X)$ is the $d$-th cyclotomic polynomial, the minimal polynomial of the primitive $d$-th roots of unity over $\mathbb{Q}$. Over a finite field $\mathbb{F}_q$, there is a natural bijection between the irreducible factors of $X^n - 1$ and the $q$-cyclotomic cosets of $\mathbb{Z}_n$: the coset $C_s$ of size $m_s$ corresponds to an irreducible factor of degree $m_s$ whose roots are $\{\zeta_n^{q^i s}\}_{i=0}^{m_s - 1}$, where $\zeta_n$ is a primitive $n$-th root of unity. Under this bijection, the irreducible factors of $\Phi_d(X)$ correspond precisely to those cosets $C_s$ for which $\zeta_n^s$ is a primitive $d$-th root of unity.
\end{defn}

\begin{defn}[Multivariate Dickson Recurrence] \cite{Macdonald1995} \label{def:dickson_multi}
Let $G(X) = X^m - A_1 X^{m-1} + \dots + (-1)^m A_m$ be a monic irreducible factor of $X^n-1$ over $\mathbb{F}_p$ (or $\mathbb{Z}_{p^e}$), with roots $\alpha, \alpha^p, \dots, \alpha^{p^{m-1}}$ forming a $q$-cyclotomic coset of size $m$ (Definition~\ref{def:cyclotomic_cosets}). The power sum traces $S_k = \sum_{i=0}^{m-1} \alpha^{p^i k}$ are determined in two phases.

\textbf{Initialization.} The first $m$ traces are computed from the seed coefficients $(A_1, \dots, A_m)$ via the Newton--Girard identities in their forward (multiplication-only) form:
\begin{equation} \label{eq:newton_girard_forward}
    S_k = \sum_{j=1}^{k-1} (-1)^{j-1} A_j\, S_{k-j} \;+\; (-1)^{k-1}\, k\, A_k, \qquad 1 \leq k \leq m
\end{equation}
where the base case $S_0 = \sum_{i=0}^{m-1} (\alpha^{p^i})^0 = m$ counts the roots. (Note: the forward recurrence~\eqref{eq:newton_girard_forward} does not reference $S_0$ directly, since the $j=k$ term is evaluated analytically as $(-1)^{k-1} k A_k$.) Since each $k A_k$ involves only multiplication by the integer $k$ (not division), this initialization is division-free and well-defined over $\mathbb{Z}_{p^e}$.

\textbf{LFSR propagation.} For all subsequent indices $k > m$, the traces satisfy the \textbf{division-free} linear recurrence:
\begin{equation} \label{eq:dickson_forward}
    S_k = A_1 S_{k-1} - A_2 S_{k-2} + \dots + (-1)^{m-1} A_m S_{k-m}
\end{equation}
This recurrence constitutes a Linear Feedback Shift Register (LFSR) whose characteristic polynomial is $G(X)$ itself. The name ``Multivariate Dickson Recurrence'' reflects the fact that this $m$-parameter recurrence generalizes the classical bivariate Dickson recurrence of Definition~\ref{def:dickson_poly}: when $m = 2$, setting $(A_1, A_2) = (x, a)$ recovers the two-term recurrence $P_k = x\, P_{k-1} - a\, P_{k-2}$. For general $m$, the roots specialize to conjugate tuples of roots of unity, and the recurrence coincides with the Generalized Chebyshev Polynomials studied in \cite{Chou1997, Fitzgerald2005}.

Crucially, this single recurrence encodes the coefficients of \emph{all} irreducible factors of the same dimension $m$, not merely those of $G(X)$ itself. For any other cyclotomic coset representative $s$, the corresponding factor $G_s(X)$ has roots $\alpha^s, \alpha^{ps}, \dots, \alpha^{p^{m-1}s}$, whose power sums satisfy $S_k^{(s)} = S_{sk \bmod n}$. Hence, the complete trace array $\{S_k\}_{k=1}^{n}$ generated from a single seed $G(X)$ contains the full coefficient information for every factor of dimension $m$ in the factorization of $X^n-1$.
\end{defn}

\begin{prop}[Dual-Track Coefficient Reconstruction] \label{prop:dual_track}
Let $G_s(X) = X^{m_s} - A_1^{(s)} X^{m_s - 1} + \dots + (-1)^{m_s} A_{m_s}^{(s)}$ be the irreducible factor of $X^n - 1$ over $\mathbb{Z}_{p^e}$ corresponding to a $q$-cyclotomic coset $C_s$ of size $m_s$, where the superscript $(s)$ indexes the coset representative~$s$ throughout. Let $S_1^{(s)}, \dots, S_{m_s}^{(s)} \in \mathbb{Z}_{p^e}$ be the dispatched power sum traces for coset $C_s$ (obtained via Proposition~\ref{prop:med_dispatch} below). Then the coefficients $(A_1^{(s)}, \dots, A_{m_s}^{(s)})$ can be uniquely recovered as follows.

\begin{enumerate}
    \item[\textup{(1)}] \textbf{Primary Track} ($p > m_s$). The Newton--Girard identities \cite{Macdonald1995} provide an explicit inversion:
    \begin{equation} \label{eq:newton_girard}
        A_k^{(s)} = \frac{1}{k} \sum_{j=1}^{k} (-1)^{j-1} A_{k-j}^{(s)}\, S_j^{(s)}, \qquad k = 1, \dots, m_s
    \end{equation}
    Since $p > m_s$ implies $\gcd(k, p) = 1$ for all $k \in \{1, \dots, m_s\}$, each $1/k$ is well-defined in $\mathbb{Z}_{p^e}$. The reconstruction cost is $O(m_s)$ per factor.
    
    \item[\textup{(2)}] \textbf{Fallback Track} ($p \leq m_s$). Newton--Girard inversion fails because at least one $k \in \{1, \dots, m_s\}$ satisfies $p \mid k$, rendering $1/k$ non-invertible. Instead, represent the root $\alpha^s$ as $X^s \bmod G_{\text{seed}}(X)$ in the quotient ring $\mathbb{Z}_{p^e}[X]/\langle G_{\text{seed}}(X) \rangle$ and compute successive powers to form the $(m_s + 1) \times m$ coefficient matrix $\mathbf{M}$. Then $G_s(X)$ is the unique monic polynomial of degree $m_s$ in the left null space of $\mathbf{M}$, recovered via Gaussian elimination. The Frobenius endomorphism guarantees $\det(\mathbf{M}) \not\equiv 0 \pmod{p}$, ensuring unconditional solvability. The reconstruction cost is $O(m_s^2 \cdot m)$ per factor.
\end{enumerate}
Together, these two tracks ensure that coefficient reconstruction succeeds for all characteristic regimes without exception.
\end{prop}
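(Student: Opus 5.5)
We work throughout in the Galois ring $R=\mathbb{Z}_{p^e}[X]/\langle G_{\text{seed}}(X)\rangle$, where $G_{\text{seed}}$ is the Hensel-lifted monic irreducible seed of degree $m$. Let $\xi$ denote the class of $X$. Since $G_{\text{seed}}\mid X^n-1$, we have $\xi^n=1$. Since $\gcd(n,p)=1$, the map $\sigma:\xi\mapsto\xi^p$ is a ring automorphism of $R$ (the lifted Frobenius) whose fixed ring is $\mathbb{Z}_{p^e}$. Put $\beta=\xi^s$ and define
\[
\widetilde G_s(Y)=\prod_{i=0}^{m_s-1}\bigl(Y-\sigma^i(\beta)\bigr)\in R[Y].
\]
Its coefficients are permuted symmetric functions under $\sigma$, because $\sigma^{m_s}(\beta)=\beta$ by the definition of $m_s$. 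Hence $\widetilde G_s\in\mathbb{Z}_{p^e}[Y]$. It is monic and divides $Y^n-1$, and modulo $p$ it reduces to the irreducible factor attached to $C_s$. Because $Y^n-1$ is separable mod $p$, Hensel uniqueness of coprime monic lifts identifies $\widetilde G_s$ with $G_s$. In particular $G_s(\beta)=0$ in $R$, and the dispatched traces are $S_j^{(s)}=\sum_i\sigma^i(\beta)^j$, consistent with Definition~\ref{def:dickson_multi} and Proposition~\ref{prop:med_dispatch}. This common setup serves both tracks.

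\textbf{Primary track.} The Newton--Girard identity $kA_k=\sum_{j=1}^{k}(-1)^{j-1}A_{k-j}S_j$ (with $A_0=1$) is a polynomial identity over $\mathbb{Z}$ between elementary and power symmetric functions. It therefore holds after specialization to the roots $\sigma^i(\beta)$ in the commutative ring $R$. When $p>m_s$, every $k\le m_s$ is a unit in $\mathbb{Z}_{p^e}$, so we can solve recursively for $A_1,A_2,\dots$. This gives \eqref{eq:newton_girard}, and uniqueness follows because each step is a division by a unit. The cost count is read off the recursion. Literally the $k$-th step costs $O(k)$, so the $O(m_s)$ claim should be understood per coefficient, or as amortized; I would note this explicitly.

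\textbf{Fallback track.} Let $\mathbf{M}$ have as rows the coordinate vectors of $1,\beta,\dots,\beta^{m_s}$ in the $\mathbb{Z}_{p^e}$-basis $1,\xi,\dots,\xi^{m-1}$ of $R$. The key step is to show that the top $m_s$ rows have a unit $m_s\times m_s$ minor. The statement's ``$\det\mathbf{M}\not\equiv0$'' must be read this way, since $\mathbf{M}$ is not square. For this, reduce mod $p$: we have $R/pR\cong\mathbb{F}_{p^m}$, and $\bar\beta$ has minimal polynomial $\bar G_s$ of degree $m_s$. Hence $1,\bar\beta,\dots,\bar\beta^{m_s-1}$ are $\mathbb{F}_p$-independent, so some $m_s\times m_s$ minor is nonzero mod $p$, hence a unit in $\mathbb{Z}_{p^e}$.

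Existence of a monic left null vector of degree $m_s$ is $G_s(\beta)=0$ from the setup. For uniqueness, suppose there are two such vectors. Their difference is a relation of degree $<m_s$ among rows with a unit minor, so it is zero. Gaussian elimination pivoting on that unit minor never needs to invert a zero-divisor, and it costs $O(m_s^2m)$. Computing the powers $\beta^j$ by multiplication modulo $G_{\text{seed}}$ is dominated by, or comparable to, this cost.

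\textbf{Main obstacle.} The main obstacle is the identification $\widetilde G_s=G_s$ over $\mathbb{Z}_{p^e}$, that is, that $\sigma$ is a well-defined automorphism of $R$ with fixed ring exactly $\mathbb{Z}_{p^e}$. I would establish this first, via the standard Galois-ring facts: $R$ is the unramified extension of $\mathbb{Z}_{p^e}$ of degree $m$, and its Teichmüller roots of unity of order prime to $p$ lift uniquely. Everything else is then linear algebra with a unit pivot.
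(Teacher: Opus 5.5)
Your proposal is correct and follows the same route as the paper. The paper gives no separate proof, only the inline justifications in the statement: Newton--Girard with $k$ a unit when $p>m_s$, and nondegeneracy mod $p$ from the Frobenius structure of $\mathbb{F}_{p^m}$ when $p\le m_s$. Your Galois-ring identification $G_s(Y)=\prod_{i<m_s}(Y-\sigma^i(\beta))$ supplies the missing rigor; you should only make explicit that it uses $\bar\xi$ of order exactly $n$ (the seed is the factor for $C_1$, which Stage~1's primitivity test guarantees), since otherwise $\bar\beta$ need not have minimal polynomial $\bar G_s$.

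Both of your corrections to the statement are also right:
\begin{itemize}
\item The Newton--Girard recursion costs $O(m_s^2)$ per factor, which matches the paper's own $O(m^2)$-per-factor accounting in its complexity section.
\item Every $(m_s+1)\times(m_s+1)$ minor of $\mathbf{M}$ vanishes because of the monic null vector. So the only meaningful reading of ``$\det(\mathbf{M})\not\equiv 0 \pmod p$'' is your unit $m_s\times m_s$ minor among the rows of $1,\beta,\dots,\beta^{m_s-1}$.
\end{itemize}
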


\section{Mathematical Baseline and Complexity Bounds}

\subsection{Discrete Field Benchmarks: $\mathbb{F}_p[X]$}
The factorization of a degree-$n$ polynomial over a finite field $\mathbb{F}_q$ ($q = p^k$) is bounded by two classical paradigms. Berlekamp's deterministic algorithm \cite{Berlekamp1968} constructs the $n \times n$ Berlekamp matrix $Q$ whose null space yields the factorization, at a cost of $O(n^3 + n^2 \log q)$ field operations (the $n^3$ term arising from Gaussian elimination, the $n^2 \log q$ term from computing $X^{iq} \bmod f(X)$ via repeated squaring). The Cantor-Zassenhaus probabilistic algorithm \cite{Cantor1981} replaces the exhaustive null-space enumeration with random splitting via $\gcd(f, g^{(q-1)/2} - 1)$, achieving an expected complexity of $O(n^2 \log q)$ with fast polynomial arithmetic \cite{Gathen1998}.

In contrast to these global algorithms, our framework separates base field factorization into two distinct phases. The first phase, seed discovery (Stage 1), relies on randomized primitivity testing within the extension field $\mathbb{F}_{p^m}$, where exponentiation costs $O(m \log p)$ multiplications. Because each multiplication in $\mathbb{F}_{p^m}$ requires $O(m^2)$ base field operations, the total algebraic cost for seed discovery is bounded by $O(m^3 \log p)$. The second phase, coefficient expansion, generates all remaining factors from the seed via the Dickson LFSR and dual-track reconstruction (Stages 3 and 4). Trace generation requires $O(n)$ base field operations, and recovering the $n/m$ factors demands $O(m^2)$ operations per factor, contributing $O(n \cdot m)$. For typically small or constant coset dimensions $m$, this expansion phase strictly scales as $O(n)$. Consequently, the composite base field algebraic complexity of our framework is bounded by $O(n + m^3 \log p)$.

\subsection{Lifting Complexity Bounds over $\mathbb{Z}_{p^e}[X]$}
Lifting factorizations from $\mathbb{F}_p$ to $\mathbb{Z}_{p^e}$ faces fundamental complexity constraints that depend on the chosen methodology.

\textbf{Classical polynomial Hensel lifting} \cite{Cohen1993} requires maintaining the complementary cofactor $H(X) = (X^n-1)/G(X)$ and the B\'{e}zout coefficients $u, v$ satisfying $u G + v H \equiv 1$. Updating these auxiliary polynomials at each precision layer adds a cost heavily dependent on the global degree $n$. Alternatively, \textbf{multivariate coefficient Hensel lifting} attempts to bypass the cofactor by solving directly for the factor's $m$ unknown coefficients. However, when the coset dimension satisfies $m \ge 2$, this requires computing and inverting an $m \times m$ Jacobian matrix of partial derivatives at each layer. Over $\mathbb{Z}_{p^e}$, where zero-divisors obstruct standard matrix inversion, resolving the singularities demands $O(p^{(m-1)^2})$ operations.

\textbf{McGuire's single-factor approach} \cite{McGuire2001} eliminates the cofactor dependency by exploiting the Graeffe root-squaring method: given a factor $G_k(X)$ accurate to precision $p^k$, the factor at precision $p^{k+1}$ is obtained by computing $\prod_{j=1}^{p} G_k(\omega^j X)$ (where $\omega$ is a primitive $p$-th root of unity) and reducing the result. This achieves cofactor-free single-factor lifting, but requires $O(p)$ polynomial multiplications per layer, making the per-layer cost $O(p \cdot m \log(pm))$. For the small primes typical of coding theory ($p = 2, 3$), this is efficient; however, for the large primes demanded by modern cryptography ($p \gg m$), the linear dependence on $p$ renders the method impractical.

In contrast, the cofactor-free Hensel lift (Theorem~\ref{thm:cofactor_free}) developed in this framework elevates a single seed factor independently of the global degree $n$. By utilizing a cached polynomial inverse computed once over $\mathbb{F}_p$, each precision layer requires only basic polynomial arithmetic modulo the seed factor, taking $O(m^2)$ algebraic operations. Across $e$ precision layers, the total lifting cost is bounded by $O(e \cdot m^2)$. 

Combining this with the base field complexity derived in Section 2.1, the grand total algebraic complexity for explicitly factoring $X^n-1$ over $\mathbb{Z}_{p^e}$ is strictly bounded by $O(n + m^3 \log p + e \cdot m^2)$. This composite bound clearly isolates the independent contributions of the global degree ($n$), field characteristic ($p$), and precision depth ($e$), formally guaranteeing the framework's immunity against the exponential explosions characteristic of multivariate Jacobian inversion.

\subsection{Cyclotomic Specializations}
For the factorization of $X^n-1$, classical approaches exploit the Number Theoretic Transform (NTT) to evaluate polynomials at roots of unity. However, NTT-based methods decompose $X^n-1$ by evaluating at individual roots $\omega^k \in \mathbb{F}_{p^m}$, requiring explicit construction of the extension field when $m \ge 2$. For coset dimensions $m \ge 3$, the extension field arithmetic and the bookkeeping of conjugate roots become the dominant cost, negating the NTT's efficiency advantages. Our framework sidesteps extension field construction entirely by operating through the LFSR-based Dickson recurrence (Definition~\ref{def:dickson_multi}), which generates all required traces directly from the seed polynomial coefficients over $\mathbb{F}_p$ (or $\mathbb{Z}_{p^e}$).

\section{Factorization over Arbitrary $n$: Cyclotomic Coset Dispatch}

Extending the factorization of $X^n-1$ to arbitrary $n$ requires understanding how the $q$-cyclotomic cosets partition the index set $\{0, 1, \dots, n-1\}$ and how the global trace array is dispatched to reconstruct individual factors.

\subsection{Cyclotomic Coset Decomposition}
For the unramified case $\gcd(n, p) = 1$, each index $s \in \{0, \dots, n-1\}$ belongs to exactly one $q$-cyclotomic coset $C_s = \{s, ps, \dots, p^{m_s-1}s \} \pmod{n}$, where $m_s = |C_s|$ denotes the coset size. Each coset $C_s$ corresponds to an irreducible factor of degree $m_s$. In general, different cosets may have different sizes, producing a \emph{mixed-degree} decomposition---this occurs not only for general $n$ but even in the structured case $n=p+1$, where the cosets $C_0 = \{0\}$ and $C_{n/2} = \{n/2\}$ degenerate to singletons corresponding to the linear factors $X-1$ and $X+1$.

\subsection{Trace Dispatch via the MED Structure}

\begin{prop}[MED Trace Dispatch] \label{prop:med_dispatch}
Let $G_{\text{seed}}(X)$ be the lifted seed of degree $m$, and let $\{S_k\}_{k=1}^{n}$ be the global trace array generated via Definition~\ref{def:dickson_multi}. For a $q$-cyclotomic coset $C_s$ of size $m_s$, define the intrinsic trace $T_k^{(s)} = \sum_{i=0}^{m_s-1} (\alpha^s)^{p^i k}$. Then:

\begin{enumerate}
    \item[\textup{(1)}] \textbf{Full-rank dispatch} ($m_s = m$). The intrinsic traces are obtained by direct index scaling:
    \begin{equation} \label{eq:trace_dispatch}
        T_k^{(s)} = S_{sk \bmod n}, \qquad k = 1, \dots, m
    \end{equation}
    
    \item[\textup{(2)}] \textbf{Degenerate dispatch} ($m_s < m$). The root $\alpha^s$ satisfies $(\alpha^s)^{p^{m_s}} = \alpha^s$, so the $m$-dimensional trace aggregates each root $m/m_s$ times:
    \begin{equation} \label{eq:degenerate_scaling}
        S_{sk \bmod n} = \frac{m}{m_s} \cdot T_k^{(s)}, \qquad k = 1, \dots, m_s
    \end{equation}
    The scaling factor $m/m_s$ is invertible over $\mathbb{Z}_{p^e}$ since $\gcd(m/m_s, \, p) = 1$ in the unramified setting \cite{Zhu_MED}. Consequently, the intrinsic traces are recovered as $T_k^{(s)} = (m/m_s)^{-1} \cdot S_{sk \bmod n}$.
\end{enumerate}
In both cases, the dispatched traces $T_1^{(s)}, \dots, T_{m_s}^{(s)}$ are then fed into the dual-track coefficient reconstruction (Proposition~\ref{prop:dual_track}) to recover the factor $G_s(X)$.
\end{prop}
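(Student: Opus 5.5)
The plan is to work in the Galois ring $R=\mathbb{Z}_{p^e}[X]/\langle G_{\text{seed}}(X)\rangle$, i.e.\ the unramified extension of degree $m$ of $\mathbb{Z}_{p^e}$, and let $\alpha$ be the class of $X$. The first step is to establish three facts.
\begin{enumerate}
\item[(a)] $\alpha^n=1$.
\item[(b)] $\alpha$ reduces modulo $p$ to a primitive $n$-th root of unity in $\mathbb{F}_{p^m}$; this is the seed assumption. In particular the multiplicative order of $p$ modulo $n$ is exactly $m$.
\item[(c)] The Frobenius automorphism $\sigma$ of $R$, of order $m$, satisfies $\sigma(\alpha)=\alpha^p$.
\end{enumerate}
For (c), I would use that $\gcd(n,p)=1$ makes $X^n-1$ separable modulo $p$. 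By Hensel's lemma each root of $X^n-1$ in $\mathbb{F}_{p^m}$ then has a unique lift to a root in $R$. Both $\sigma(\alpha)$ and $\alpha^p$ are roots of $X^n-1$ lifting $\bar\alpha^p$, so they coincide. It follows that the roots of $G_{\text{seed}}$ in $R$ are $\alpha^{p^i}$ for $0\le i<m$, and that $S_k=\sum_{i=0}^{m-1}\alpha^{p^ik}$. This closed form agrees with the LFSR output of Definition~\ref{def:dickson_multi}, since both sequences satisfy the same recurrence with the same Newton--Girard initial values.

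Next, I would note that because $\alpha^n=1$, the exponent $p^isk$ may be reduced modulo $n$. Hence
\[
\sum_{i=0}^{m-1}(\alpha^{s})^{p^ik}=S_{sk\bmod n},
\]
which is the single identity underlying both cases.

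For case (1), with $m_s=m$, the roots of $G_s$ are exactly $\alpha^{sp^i}$ for $0\le i<m$. These are the Hensel lifts of the distinct conjugates indexed by $C_s$. The left-hand side above is then by definition $T_k^{(s)}$, which gives~\eqref{eq:trace_dispatch}.

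For case (2), I would first observe that $C_s$ is an orbit of multiplication by $p$ on $\mathbb{Z}_n$. Since $p^m\equiv 1\pmod n$, the orbit size $m_s$ divides $m$. The sequence $i\mapsto sp^i\bmod n$ is periodic with exact period $m_s$. The $m$ terms of the sum therefore run $m/m_s$ times through the $m_s$ distinct roots $(\alpha^s)^{p^j}$, $0\le j<m_s$, of $G_s$. This gives $S_{sk\bmod n}=(m/m_s)\,T_k^{(s)}$, which is~\eqref{eq:degenerate_scaling}. The identification of the roots of $G_s$ over $\mathbb{Z}_{p^e}$ again relies on uniqueness of Hensel lifts.

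The main obstacle is the invertibility of $m/m_s$, and I expect it \emph{not} to follow from $\gcd(n,p)=1$ alone. Take $n=15$ and $p=2$. Then $m=\mathrm{ord}_{15}(2)=4$, while $C_5=\{5,10\}$ has $m_s=2$, so $m/m_s=2$ is a zero divisor in $\mathbb{Z}_{2^e}$. I would therefore prove the recovery formula $T_k^{(s)}=(m/m_s)^{-1}S_{sk\bmod n}$ under the explicit hypothesis $p\nmid m/m_s$. This hypothesis holds automatically, for instance, when $p>m$, since $m/m_s\le m<p$.

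In the remaining case, $p\mid m/m_s$, I would bypass the scaling entirely. One option is to compute $T_k^{(s)}$ directly as $\operatorname{Tr}_{R_s/\mathbb{Z}_{p^e}}(\alpha^{sk})$ in the subring $R_s$ of $R$ fixed by $\sigma^{m_s}$. The other is to feed $\alpha^s=X^s\bmod G_{\text{seed}}$ straight into the quotient-ring fallback of Proposition~\ref{prop:dual_track}(2), which needs no traces at all. Either route completes the dispatch for every coset.
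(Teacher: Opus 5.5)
Your proposal is correct. For part (1) and the aggregation identity $S_{sk \bmod n}=(m/m_s)\,T_k^{(s)}$ in part (2) it uses the paper's own argument. The paper offers only the remark embedded in the statement: $(\alpha^s)^{p^{m_s}}=\alpha^s$, so once exponents are reduced modulo $n$, the $m$-term trace cycles $m/m_s$ times through the $m_s$ conjugates.

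You also supply the foundations the paper leaves implicit:
\begin{itemize}
\item where $\alpha$ lives over $\mathbb{Z}_{p^e}$, namely the Galois ring $R$;
\item why $\sigma(\alpha)=\alpha^p$, via Hensel uniqueness for the separable $X^n-1$;
\item why the LFSR array equals the power sums;
\item the primitivity hypothesis (b), which is genuinely needed. Without it $m_s$ need not divide $m$: with a degree-$3$ seed for $n=21$, $p=2$, the coset $C_7=\{7,14\}$ has size $2$.
\end{itemize}

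You are right to reject the final invertibility claim. $\gcd(m/m_s,p)=1$ does not follow from $\gcd(n,p)=1$, and the appeal to \cite{Zhu_MED} does not repair it. Your example $n=15$, $p=2$, $C_5=\{5,10\}$ is valid: $S_5=-2$ determines $T_1^{(5)}=-1$ only modulo $2^{e-1}$.

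In that example, however, the paper's pipeline would send $C_5$ to the fallback track anyway, since $p\le m_s$. A case where the primary track is actually invoked and breaks is $n=28$, $p=3$, $m=6$, $C_7=\{7,21\}$, which corresponds to the factor $X^2+1$. There $p>m_s=2$ but $m/m_s=3=p$, so $S_7=0$ and $S_{14}=-6$ fix $T_1^{(7)}=0$ and $T_2^{(7)}=-2$ only modulo $3^{e-1}$.

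So your repair is the right one. Prove the recovery formula under $p\nmid m/m_s$, which holds automatically when $p>m$. For the remaining cosets, work directly with $\alpha^s = X^s \bmod G_{\text{seed}}$.
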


\begin{example}[Degenerate Coset Dispatch]
Consider $p=5, n=12$. The seed has dimension $m=2$ (since $5^2 \equiv 1 \pmod{12}$). The coset $C_3 = \{3\}$ has size $m_s = 1$ (since $5 \times 3 = 15 \equiv 3 \pmod{12}$), corresponding to a linear factor $X - c$. The global trace $S_3 = \alpha^3 + (\alpha^3)^5 = 2\alpha^3$, which aggregates the single root $\alpha^3$ twice ($m/m_s = 2$). Dividing by $2$ (i.e., multiplying by $2^{-1} \equiv 3 \pmod{5}$) recovers $\alpha^3 = 3S_3 \pmod{5}$, yielding the linear factor $(X - \alpha^3)$.
\end{example}

\section{Structural Analysis: Multivariate Ideals and Their Jacobian Barrier}

In our foundational work \cite{Wang2026ISIT}, the factorization of $X^{p+1}-1$ over $\mathbb{Z}_{p^e}$ was driven by a single auxiliary polynomial $V(x)$, whose roots over $\mathbb{F}_p$ directly enumerated all valid structural values of the quadratic factors, enabling Jacobian-free Hensel lifting. However, this elegant one-dimensional mechanism was specific to $n=p+1$ and $m=2$. We now ask: \emph{where does $V(x)$ come from algebraically, and how does it generalize to arbitrary $n$ and $m$?}

The answer is the Ideal Derivation Modulo Principle, which derives the complete polynomial ideal governing factorization coordinates for any $(n, m)$. We first state and prove the general theorem, then demonstrate that the V1 structural polynomial $V(x)$ emerges as its exact 1-dimensional specialization. Finally, we show that solving this general ideal via Jacobian methods leads to an exponential barrier, motivating the local cofactor-free approach of Section~\ref{sec:local_lift}.

\subsection{Theorem of Ideal Derivation Modulo Principle}

\begin{thm}[Ideal Derivation Modulo Principle] \label{thm:ideal_modulo}
Let $G(\mathbf{A}, X) = X^m - A_1 X^{m-1} + \dots + (-1)^m A_m$ be a monic degree-$m$ polynomial whose coefficients $\mathbf{A} = (A_1, \dots, A_m)$ are treated as indeterminates. Performing polynomial long division of $X^n - 1$ by $G(\mathbf{A}, X)$ in $X$ yields a remainder of degree at most $m-1$:
\begin{equation}
    R(\mathbf{A}, X) = V_{m-1}(\mathbf{A})\, X^{m-1} + \dots + V_1(\mathbf{A})\, X + V_0(\mathbf{A})
\end{equation}
where each $V_i(\mathbf{A})$ is an explicit polynomial in the ring $\mathbb{F}_p[A_1, \dots, A_m]$. These $m$ polynomials generate the ideal
\begin{equation}
    I_{n,m} = \langle V_0, \, V_1, \, \dots, \, V_{m-1} \rangle \;\subseteq\; \mathbb{F}_p[A_1, \dots, A_m]
\end{equation}
Then $G(\mathbf{A}^*, X)$ is a monic factor of $X^n - 1$ over $\mathbb{F}_p$ if and only if $\mathbf{A}^*$ lies in the algebraic variety of $I_{n,m}$, i.e.,
\begin{equation}
    V_0(\mathbf{A}^*) = V_1(\mathbf{A}^*) = \dots = V_{m-1}(\mathbf{A}^*) = 0 \quad \text{in } \mathbb{F}_p
\end{equation}
Consequently, every degree-$m$ irreducible factor of $X^n-1$ over $\mathbb{F}_p$ is uniquely identified by a common root $\mathbf{A}^* \in \mathbb{F}_p^m$ of this ideal, and each such root produces exactly one factor.
\end{thm}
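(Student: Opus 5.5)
The plan is to treat the remainder map as a specialization-compatible construction and then invoke the uniqueness of polynomial division by a monic divisor. First I will verify that long division of $X^n-1$ by $G(\mathbf{A},X)$ can be carried out in $\mathbb{Z}[A_1,\dots,A_m][X]$, hence in $\mathbb{F}_p[A_1,\dots,A_m][X]$. This works because $G$ is monic in $X$, so no division by a coefficient is ever needed. Concretely, each step subtracts a monomial multiple $c(\mathbf{A})X^{j}G(\mathbf{A},X)$ with $c(\mathbf{A})$ a polynomial in $\mathbf{A}$. The output is an identity
\[
X^n-1 = Q(\mathbf{A},X)\,G(\mathbf{A},X) + R(\mathbf{A},X),\qquad \deg_X R\le m-1,
\]
with $Q$ and $R$ having coefficients in $\mathbb{F}_p[\mathbf{A}]$. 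Alternatively, one can note that $V_i(\mathbf{A})$ is the $X^i$-coordinate of $X^n \bmod G$ minus $\delta_{i0}$. Via the companion matrix $C(\mathbf{A})$ of $G$, this coordinate is an entry of $C(\mathbf{A})^n - I$ applied to $e_0$, which is manifestly polynomial in $\mathbf{A}$.

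Second, I will specialize $\mathbf{A}\mapsto\mathbf{A}^*\in\mathbb{F}_p^m$. Evaluation is a ring homomorphism $\mathbb{F}_p[\mathbf{A}][X]\to\mathbb{F}_p[X]$, so the identity above specializes to $X^n-1 = Q(\mathbf{A}^*,X)G(\mathbf{A}^*,X)+R(\mathbf{A}^*,X)$ with $\deg R(\mathbf{A}^*,X)\le m-1<m=\deg G(\mathbf{A}^*,X)$. By uniqueness of division with remainder by the monic polynomial $G(\mathbf{A}^*,X)$ in $\mathbb{F}_p[X]$, the polynomial $R(\mathbf{A}^*,X)$ is exactly the remainder of $X^n-1$ modulo $G(\mathbf{A}^*,X)$. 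Hence $G(\mathbf{A}^*,X)\mid X^n-1$ if and only if $R(\mathbf{A}^*,X)=0$, which holds if and only if $V_i(\mathbf{A}^*)=0$ for all $i$. That is, $\mathbf{A}^*$ lies in the $\mathbb{F}_p$-rational points of the variety of $I_{n,m}$.

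Third, for the ``consequently'' clause, I will use that a monic polynomial is determined by its coefficient vector. The map $\mathbf{A}^*\mapsto G(\mathbf{A}^*,X)$ is therefore a bijection between $\mathbb{F}_p^m$ and the monic degree-$m$ polynomials over $\mathbb{F}_p$. Restricted to the rational points of $V(I_{n,m})$, it is a bijection onto the set of monic degree-$m$ divisors of $X^n-1$. Each irreducible factor of degree $m$ (a coset of size $m$ under Definition~\ref{def:cyclotomic_cosets}) therefore corresponds to exactly one root $\mathbf{A}^*$, and distinct roots give distinct factors.

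The main obstacle is stating this last point correctly rather than proving anything deep. The variety also contains points giving reducible monic divisors of degree $m$, such as products of smaller-coset factors whose degrees sum to $m$, since $\gcd(n,p)=1$ makes $X^n-1$ squarefree. I would phrase the conclusion as an injection from irreducible degree-$m$ factors into the rational points of $V(I_{n,m})$, identifying the irreducible ones as the subset whose root orbit is a single coset. I would also stress that only $\mathbb{F}_p$-rational points are meant, since points over the algebraic closure correspond to divisors over $\overline{\mathbb{F}}_p$.
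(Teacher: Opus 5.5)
Your argument is correct. It shares the paper's skeleton: generic division by the monic $G(\mathbf{A},X)$ over $\mathbb{F}_p[\mathbf{A}]$, then specialization, then vanishing of the coefficients of $R$. It proves more than the paper's proof does, however.

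The paper argues only one direction. It assumes $G(\mathbf{A}^*,X)$ is an irreducible factor, deduces $R(\mathbf{A}^*,X)=0$, and then simply asserts that the admissible $\mathbf{A}^*$ are ``precisely'' the common zeros. Along the way it tacitly uses that evaluating the generic remainder at $\mathbf{A}^*$ gives the remainder of the specialized division. Your appeal to uniqueness of division by the monic $G(\mathbf{A}^*,X)$ in $\mathbb{F}_p[X]$ justifies that step and also gives the converse. So the ``if and only if'' is actually established rather than stated.

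Your caveat on the final clause is also correct, and it amounts to a correction of the statement as worded. The $\mathbb{F}_p$-rational points of the variety are in bijection with \emph{all} monic degree-$m$ divisors of $X^n-1$, not only the irreducible ones. Take the paper's own mixed-degree example $p=2$, $n=21$, $m=3$. There the point $\mathbf{A}^*=(0,0,1)$ gives $X^3+1=(X+1)(X^2+X+1)$, which divides $X^{21}-1$ but is reducible; it sits alongside the two points giving the irreducible cubics. So ``each such root produces exactly one factor'' holds only if ``factor'' is read as ``monic divisor''. The irreducible degree-$m$ factors merely inject into the rational points, which is exactly how you propose to phrase it.
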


\begin{proof}
Let $G(\mathbf{A}, X) = X^m - A_1 X^{m-1} + \dots + (-1)^m A_m$ be a monic polynomial of degree $m$ whose coefficients are parameterized by the elementary symmetric polynomials $\mathbf{A} = (A_1, \dots, A_m)$ of its roots. Performing polynomial long division of $X^n - 1$ by $G(\mathbf{A}, X)$ in the indeterminate $X$ yields:
\begin{equation}
    X^n - 1 = Q(\mathbf{A}, X) \cdot G(\mathbf{A}, X) + R(\mathbf{A}, X)
\end{equation}
where $Q(\mathbf{A}, X)$ is the quotient and $R(\mathbf{A}, X)$ is the remainder. Since $G$ is monic of degree $m$, the division algorithm guarantees $\deg_X R < m$, so:
\begin{equation}
    R(\mathbf{A}, X) = V_{m-1}(\mathbf{A})\, X^{m-1} + V_{m-2}(\mathbf{A})\, X^{m-2} + \dots + V_0(\mathbf{A})
\end{equation}
where each coefficient $V_i(\mathbf{A})$ is a polynomial expression in $A_1, \dots, A_m$. Concretely, the division proceeds by the iterative reduction $X^k \equiv A_1 X^{k-1} - A_2 X^{k-2} + \dots + (-1)^{m-1} A_m X^{k-m}$ for $k \geq m$, applied successively to $X^n, X^{n-1}, \dots$ until all terms are reduced below degree $m$.

Now suppose that a specific assignment $\mathbf{A} = \mathbf{A}^*$ produces an irreducible factor $G(\mathbf{A}^*, X)$ that divides $X^n - 1$ over $\mathbb{F}_p$. Then $X^n - 1 \equiv 0 \pmod{G(\mathbf{A}^*, X)}$, which forces:
\begin{equation}
    R(\mathbf{A}^*, X) = V_{m-1}(\mathbf{A}^*)\, X^{m-1} + \dots + V_0(\mathbf{A}^*) \equiv 0
\end{equation}
as a polynomial identity in $X$. Since $\{1, X, X^2, \dots, X^{m-1}\}$ are linearly independent over $\mathbb{F}_p$ (their degrees are pairwise distinct), the vanishing of $R$ requires each coefficient to vanish independently:
\begin{equation}
    V_i(\mathbf{A}^*) = 0, \qquad i = 0, 1, \dots, m-1
\end{equation}
Therefore, the admissible coordinate values $\mathbf{A}^*$ are precisely the common zeros of the polynomial system $\{V_0 = 0, \, V_1 = 0, \, \dots, \, V_{m-1} = 0\}$. This system constitutes the polynomial ideal governing the factorization coordinates over $\mathbb{F}_p$.
\end{proof}

\subsection{Exact Equivalence with the V1 Structural Polynomial $V(x)$}

We now demonstrate that the V1 structural polynomial $V(x)$ from \cite{Wang2026ISIT} is the exact 1-dimensional specialization of the Ideal Derivation Modulo Principle. The derivation proceeds in five precise steps.

\textbf{Step 1: Restrict to $m=2$.} For a quadratic factor $G(S, P, X) = X^2 - SX + P$ parameterized by the trace $S = A_1$ and the norm $P = A_2$, the division of $X^n-1$ by $G$ yields a degree-1 remainder:
\begin{equation}
    X^n - 1 \equiv V_1(S, P)\, X + V_0(S, P) \pmod{X^2 - SX + P}
\end{equation}
The Ideal Derivation Modulo Principle produces two constraints: $V_1(S, P) = 0$ and $V_0(S, P) = 0$.

\textbf{Step 2: Evaluate $V_1$ and $V_0$ explicitly.} By recursively applying the reduction $X^k \equiv S X^{k-1} - P X^{k-2}$, each power $X^k$ reduces to a linear form $X^k \equiv a_k X + b_k \pmod{X^2 - SX + P}$, where the coefficients $a_k$ and $b_k$ satisfy the same recurrence $a_k = S a_{k-1} - P a_{k-2}$ with initial conditions $a_0 = 0, a_1 = 1$ (and $b_0 = 1, b_1 = 0$ respectively). Recognizing that the recurrence for $a_k$ is precisely the defining recurrence of the Dickson polynomial of the second kind $E_k(S, P)$ (Definition~\ref{def:dickson_poly}), we identify $a_k = E_{k-1}(S, P)$. The detailed computation is given in Appendix~\ref{app:dickson_derivation}. The result is:
\begin{align}
    V_1(S, P) &= E_{n-1}(S, P) \\
    V_0(S, P) &= -P \cdot E_{n-2}(S, P) - 1
\end{align}

\textbf{Step 3: Impose the cyclotomic norm constraint $P=1$.} For $n = p+1$, each non-trivial cyclotomic coset has the form $C_s = \{s, \, ps \equiv -s \pmod{p+1}\}$. The corresponding quadratic factor has roots $\zeta^s$ and $\zeta^{-s}$, whose product is $\zeta^s \cdot \zeta^{-s} = 1$. Therefore, the norm is structurally fixed: $P = 1$ for every quadratic factor. After substituting $P = 1$, the two-variable system collapses to a single-variable system:
\begin{align}
    V_1(S, 1) &= E_{n-1}(S, 1) = 0 \\
    V_0(S, 1) &= -E_{n-2}(S, 1) - 1 = 0
\end{align}

\textbf{Step 4: Bridge to the structural variable.} The constraint $E_p(A_i, 1) = 0$ acts on the trace $A_i$. In \cite{Wang2026ISIT}, however, the structural polynomial $V(x)$ acts on the \emph{structural variable} $S_i = 2 - A_i^2$. These two formulations are linked by the Dickson identity $D_2(A_i, 1) = A_i^2 - 2$, which gives $S_i = -D_2(A_i, 1) = -A_{2i}$.

The connection arises from the conjugate pairing structure of $n=p+1$. Since $\alpha^{(p+1)/2} = -1$, the index pair $(i, \frac{p+1}{2}-i)$ satisfies $A_j = -A_i$, so their quadratic factors multiply as:
\begin{equation}
    (X^2 - A_i X + 1)(X^2 + A_i X + 1) = X^4 + S_i X^2 + 1
\end{equation}
This pairing reduces the $\frac{p-1}{2}$ independent trace values to $k = \lfloor \frac{p}{4} \rfloor$ independent structural variables $S_1, \dots, S_k$. The product of all paired quartic factors yields:
\begin{equation}
    \prod_{i=1}^{k} (X^4 + S_i X^2 + 1) \equiv \sum_{j=0}^{k} X^{4j} \pmod{p}
\end{equation}
Expanding the left side and matching coefficients shows that the $S_i$ are precisely the roots of a degree-$k$ polynomial whose coefficients are the elementary symmetric polynomials determined by alternating binomial identities---this is exactly the structural polynomial $V(x)$ of \cite{Wang2026ISIT} (Definition~2 therein). Thus, the V1 polynomial $V(x)$ is obtained from our $E_p(A_i, 1) = 0$ by the substitution $x = 2 - A_i^2$ and the subsequent conjugate pairing reduction. The case $A_i = 0$ (yielding $S_i = 2$ and $X^2 + 1$) arises precisely when $4 \mid (p+1)$.

\textbf{Step 5: Redundancy of $V_0$.} The constraint $V_0(S, 1) = 0$ is automatically satisfied whenever $V_1(S, 1) = 0$ and $P = 1$. This follows because if $E_{p}(S, 1) = 0$, then $G(X) = X^2 - SX + 1$ divides $X^{p+1} - 1$, which forces the \emph{entire} remainder to vanish, including its constant term $V_0$. Consequently, the two-constraint system $\{V_1 = 0, V_0 = 0\}$ reduces to the single scalar equation $E_p(S, 1) = 0$ over $\mathbb{F}_p$.

This completes the exact equivalence chain: the Ideal Derivation Modulo Principle (Theorem~\ref{thm:ideal_modulo}) produces $E_p(A_i, 1) = 0$ on the trace variable $A_i$; the Dickson substitution $S_i = 2 - A_i^2$ and conjugate pairing reduction then collapse this into the structural polynomial $V(S_i) = 0$ of \cite{Wang2026ISIT}. The general theorem extends this mechanism to arbitrary $(n, m)$ by providing $m$ independent constraints on $m$ free coordinates---but at the cost of requiring multivariate Jacobian inversion to solve the resulting system.

\subsection{The Jacobian Complexity Barrier}

Theorem~\ref{thm:ideal_modulo} provides a mathematically complete characterization of the factorization coordinates. However, \emph{solving} the polynomial system $\{V_i(\mathbf{A}) = 0\}$ over $\mathbb{Z}_{p^e}$ via multivariate Newton--Hensel iteration requires evaluating and inverting the $m \times m$ Jacobian matrix $J = (\partial V_i / \partial A_j)$ at each lifting layer. Over integer rings permeated by zero-divisors, this inversion demands exhaustive enumeration bounded by $O(p^{(m-1)^2})$ per layer---a cost that grows exponentially in the coset dimension $m$ and becomes computationally prohibitive even for moderate $m \geq 3$.

This exponential barrier is not an artifact of a particular algorithm but a structural consequence of the global multivariate approach: any method that attempts to lift \emph{all} symmetric coordinates $\mathbf{A}$ simultaneously must contend with the coupled nonlinear structure of the ideal. The framework developed in the following section circumvents this barrier entirely by abandoning the global ideal in favor of local, per-factor Hensel lifting.

\begin{rem}[Scalar collapse for $n=p+1$]
For the specific boundary $n=p+1$ with $m=2$, the cyclotomic constraint $P=1$ (Section~5.2, Step~3) eliminates one coordinate, reducing the $2 \times 2$ Jacobian to the scalar derivative $V_1'(S) = E_p'(S, 1)$. This recovers the V1 mechanism of \cite{Wang2026ISIT}: the Hensel update $S^{(h+1)} = S^{(h)} - V_1(S^{(h)}) / V_1'(S^{(1)})$ requires only a single pre-computed inverse, bypassing matrix inversion entirely. Notably, the cofactor-free lift of Section~\ref{sec:local_lift} achieves an analogous single-inverse property for arbitrary $m$: its cached polynomial inverse $[H(X)]^{-1} \bmod G_1(X)$ is, by the product rule identity $f' \equiv G_1' H \pmod{G_1}$, equivalent to $G_1'(X) \cdot [f'(X)]^{-1} \bmod G_1(X)$---the polynomial-ring counterpart of the scalar derivative inverse.
\end{rem}

\begin{example}[Higher-Dimensional Extraction: $p=2, n=7, m=3$]
For $p=2, n=7$, the multiplicative order of $2$ modulo $7$ is $3$, yielding coset dimension $m=3$---beyond the reach of V1's scalar $V(x)$ mechanism. The Ideal Derivation Modulo Principle produces three constraints $V_0(A_1, A_2, A_3) = V_1(A_1, A_2, A_3) = V_2(A_1, A_2, A_3) = 0$. Over $\mathbb{F}_2$, the framework extracts:
\begin{equation}
    X^7-1 \equiv (X+1)(X^3+X+1)(X^3+X^2+1) \pmod 2
\end{equation}
\end{example}

\begin{example}[Mixed-Degree Decomposition: $p=2, n=21$]
For $p=2, n=21$, the cyclotomic cosets have four distinct sizes ($m_s \in \{1, 2, 3, 6\}$), producing a mixed-degree factorization:
\begin{align}
    X^{21}-1 \equiv \ &(X+1)(X^2+X+1)(X^3+X+1)(X^3+X^2+1) \notag \\
                      &(X^6+X^4+X^2+X+1)(X^6+X^5+X^4+X^2+1) \pmod 2
\end{align}
The coset dispatch mechanism (Section~4) handles each dimension independently: the degree-1 and degree-2 factors are recovered via MED scaling from the degree-3 seed's trace array, while the degree-6 factors require a separate seed of matching dimension.
\end{example}

\section{Cofactor-Free Hensel Lift} \label{sec:local_lift}

Rather than solving the global multivariate ideal of Section~5, we adopt a fundamentally different strategy: each irreducible factor is lifted independently from $\mathbb{F}_p$ to $\mathbb{Z}_{p^e}$ via a local, cofactor-free Hensel mechanism. Figure~\ref{fig:lift_compare} illustrates the structural contrast between the two approaches.

\begin{figure}[htbp]
\centering
\begin{tikzpicture}[%
  >=Stealth,
  node distance=0.5cm and 1.6cm,
  every node/.style={font=\small},
  box/.style={draw, rounded corners=3pt, minimum width=2.6cm, minimum height=0.6cm, align=center, thick},
  startbox/.style={box, fill=blue!8},
  layerbox/.style={box, fill=orange!10},
  resultbox/.style={box, fill=violet!10, minimum width=1.5cm},
  tracebox/.style={box, fill=cyan!10},
  costlabel/.style={font=\footnotesize\itshape, text=red!70!black},
  cachelabel/.style={font=\footnotesize\itshape, text=green!50!black},
  titlebox/.style={font=\bfseries\small, align=center},
]

\node[titlebox] (titleL) {Classical Hensel Lifting};
\node[startbox, below=0.35cm of titleL] (startL) {$G_1^{(i)},\; H_1^{(i)},\; u,v$\\{\tiny($\bmod p$)}};
\node[layerbox, below=of startL] (L1) {Update $G_2^{(i)}, H_2^{(i)}, u, v$\\{\tiny($\bmod p^2$)}};
\node[layerbox, below=of L1] (L2) {Update $G_3^{(i)}, H_3^{(i)}, u, v$\\{\tiny($\bmod p^3$)}};
\node[below=0.25cm of L2] (dotsL) {$\vdots$};
\node[layerbox, below=0.25cm of dotsL] (Le) {Update $G_e^{(i)}, H_e^{(i)}, u, v$\\{\tiny($\bmod p^e$)}};

\draw[->, thick] (startL) -- (L1);
\draw[->, thick] (L1) -- (L2);
\draw[->, thick] (L2) -- (dotsL);
\draw[->, thick] (dotsL) -- (Le);

\node[costlabel, right=0.05cm of L1] {$O(n)$};
\node[costlabel, right=0.05cm of L2] {$O(n)$};
\node[costlabel, right=0.05cm of Le] {$O(n)$};

\node[below=0.35cm of Le, font=\footnotesize\bfseries, text=red!60!black, align=center] (repeatL) {Repeat for each factor\\$i = 1, 2, \dots, r$};
\draw[->, thick, red!60!black] (Le) -- (repeatL);

\node[titlebox, right=5.0cm of titleL] (titleR) {Cofactor-Free Lifting (Ours)};
\node[startbox, below=0.35cm of titleR] (startR) {$G_1$ (single seed)\\{\tiny($\bmod p$)}};

\node[box, fill=green!10, right=0.55cm of startR, minimum width=2.0cm] (cache) {$C(X) = [H]^{-1}$\\{\tiny $\bmod G_1$}};
\draw[->, thick, dashed, green!50!black] (startR.east) -- (cache.west);
\node[cachelabel, above=0.02cm of cache] {one-time};

\node[layerbox, below=of startR] (R1) {$G_2 = G_1 + p\, E_1 C$\\{\tiny($\bmod p^2$)}};
\node[layerbox, below=of R1] (R2) {$G_3 = G_2 + p^2 E_2 C$\\{\tiny($\bmod p^3$)}};
\node[below=0.25cm of R2] (dotsR) {$\vdots$};
\node[layerbox, below=0.25cm of dotsR] (Re) {$G_e$ (lifted seed)\\{\tiny($\bmod p^e$)}};

\draw[->, thick] (startR) -- (R1);
\draw[->, thick] (R1) -- (R2);
\draw[->, thick] (R2) -- (dotsR);
\draw[->, thick] (dotsR) -- (Re);

\draw[->, dashed, green!50!black] (cache.south) -- ++(0,-0.35) -| ([xshift=0.4cm]R1.east) -- (R1.east);
\draw[->, dashed, green!50!black] ([xshift=0.4cm]R1.east) |- (R2.east);
\draw[->, dashed, green!50!black] ([xshift=0.4cm]R1.east) |- (Re.east);

\node[costlabel, left=0.05cm of R1] {$O(m^2)$};
\node[costlabel, left=0.05cm of R2] {$O(m^2)$};
\node[costlabel, left=0.05cm of Re] {$O(m^2)$};

\node[tracebox, below=0.5cm of Re] (trace) {Trace array $\{S_k\}_{k=1}^{n}$\\{\tiny Dickson LFSR (Def.~3)}};
\draw[->, thick] (Re) -- (trace);

\node[resultbox, below left=0.55cm and 0.8cm of trace] (f1) {$G_{s_1}(X)$};
\node[resultbox, below=0.55cm of trace] (f2) {$G_{s_2}(X)$};
\node[resultbox, below right=0.55cm and 0.8cm of trace] (f3) {$\cdots\; G_{s_r}(X)$};

\draw[->, thick, violet!70!black] (trace) -- (f1);
\draw[->, thick, violet!70!black] (trace) -- (f2);
\draw[->, thick, violet!70!black] (trace) -- (f3);

\node[font=\footnotesize\itshape, text=violet!60!black, below=0.15cm of f2] {MED dispatch + dual-track (Props.~1--2)};

\end{tikzpicture}
\caption{Structural comparison of lifting strategies. \textbf{Left:} Classical polynomial Hensel lifting maintains and updates the factor $G^{(i)}$, cofactor $H^{(i)}$, and B\'{e}zout coefficients $u, v$ at every precision layer, incurring $O(n)$ cost per layer; this procedure must be repeated independently for each of the $r$ irreducible factors. \textbf{Right:} Our cofactor-free approach lifts a \emph{single} seed factor using a cached inverse $C(X)$ precomputed once over $\mathbb{F}_p$ (dashed green arrows). At the target precision $p^e$, the lifted seed generates the global trace array $\{S_k\}$ via the Dickson LFSR, from which \emph{all} remaining factors are reconstructed simultaneously via MED dispatch and dual-track inversion (purple arrows), without any additional Hensel lifts.}
\label{fig:lift_compare}
\end{figure}

The complete four-stage pipeline is formalized in Algorithm~\ref{alg:main}, whose stages correspond directly to the flow depicted in Figure~\ref{fig:lift_compare} (right).

\begin{algorithm}[h!]
\caption{Cofactor-Free Factorization of $X^n-1$ over $\mathbb{Z}_{p^e}$} \label{alg:main}
\KwIn{Cyclotomic order $n$, prime $p$ with $\gcd(n, p) = 1$, precision depth $e$}
\KwOut{Complete irreducible factorization of $X^n-1$ over $\mathbb{Z}_{p^e}$}
\tcp{Stage 1: Seed Discovery over $\mathbb{F}_p$}
Find a single irreducible factor $G_1(X) \in \mathbb{F}_p[X]$ of $X^n-1$ of degree $m$ via randomized primitivity testing ($X^n \equiv 1, \; X^{n/q} \not\equiv 1 \bmod G_1$ for all prime divisors $q \mid n$)\;
\tcp{Stage 2: Single-Seed Hensel Lift (Theorem~\ref{thm:cofactor_free})}
$C(X) \leftarrow [H(X)]^{-1} \bmod G_1(X)$, where $H(X) = (X^n-1)/G_1(X) \bmod p$ \quad \tcp{Cached once}
$G_{lift}(X) \leftarrow G_1(X)$\;
\For{$h = 1$ \KwTo $e-1$}{
  $E_h(X) \leftarrow (X^n \bmod G_{lift}(X) - 1) / p^h \bmod p$\;
  $\Delta G_h(X) \leftarrow E_h(X) \cdot C(X) \bmod G_1(X)$ \quad \tcp{Computed over $\mathbb{F}_p$}
  $G_{lift}(X) \leftarrow G_{lift}(X) + p^h \cdot \Delta G_h(X) \bmod p^{h+1}$\;
}
\tcp{Stage 3: Trace Generation from Lifted Seed}
Extract elementary symmetric polynomials $\{A_k\}$ from $G_{lift}(X)$ over $\mathbb{Z}_{p^e}$\;
Generate global trace array $\{S_k\}_{k=1}^{n}$ via the initialization \eqref{eq:newton_girard_forward} and LFSR propagation \eqref{eq:dickson_forward} of Definition~\ref{def:dickson_multi}\;
\tcp{Stage 4: Dual-Track Factor Reconstruction}
\ForEach{cyclotomic coset $C_s$ with dimension $m_s$}{
  Extract sub-traces $\{S_{s \cdot k \bmod n}\}$ via MED scaling (Proposition~\ref{prop:med_dispatch})\;
  \eIf{$p > m_s$}{
    Recover factor coefficients via Newton-Girard inversion\;
  }{
    Recover via quotient ring minimal polynomial (Gaussian elimination in $\mathbb{Z}_{p^e}[X]/\langle G_{lift} \rangle$)\;
  }
}
\Return{Complete factorization of $X^n-1$ over $\mathbb{Z}_{p^e}$}
\end{algorithm}

As illustrated by the fan-out in Figure~\ref{fig:lift_compare}, the lifted seed $G_{lift}(X)$ produced by Stage~2 of Algorithm~\ref{alg:main} serves as the sole input from which all remaining factors are generated---via Dickson trace recurrence (Stage~3) and dual-track coefficient reconstruction (Stage~4)---without additional Hensel lifts.

\subsection{Formalization: Cofactor-Free Lifting}

Traditional Hensel lifting requires maintaining and updating both the factor $G(X)$ and its complementary cofactor $H(X) = (X^n-1)/G(X)$ at each precision layer, with per-layer cost scaling with $\deg(H) = n - m$. The following theorem shows that the cofactor need only be accessed once---to compute a cached inverse over $\mathbb{F}_p$---after which all subsequent lifting iterations proceed without any cofactor involvement.

\begin{thm}[Cofactor-Free Hensel Lift] \label{thm:cofactor_free}
Let $G_1(X) \in \mathbb{F}_p[X]$ be a monic irreducible factor of $X^n - 1$ of degree $m$, and let $H(X) = (X^n - 1)/G_1(X) \in \mathbb{F}_p[X]$ be the complementary cofactor. Define the cached inverse
\begin{equation}
    C(X) \;=\; \bigl[H(X)\bigr]^{-1} \bmod G_1(X) \quad \text{in } \mathbb{F}_p[X]/\langle G_1(X) \rangle
\end{equation}
which exists since $\gcd(G_1, H) = 1$ in $\mathbb{F}_p[X]$. Suppose $G_h(X) \in \mathbb{Z}_{p^h}[X]$ is a monic lift satisfying $X^n - 1 \equiv 0 \pmod{G_h(X), \, p^h}$. Then:

\begin{enumerate}
    \item[\textup{(1)}] The error polynomial at layer $h$,
    \begin{equation}
        E_h(X) \;\equiv\; \frac{X^n \bmod G_h(X) \;-\; 1}{p^h} \pmod{p}
    \end{equation}
    depends on the current approximation $G_h(X)$ and thus \emph{changes at every layer}.
    
    \item[\textup{(2)}] The next-layer lift is obtained by computing the correction in $\mathbb{F}_p[X]$:
    \begin{equation} \label{eq:cofactor_free_correction}
        \Delta G_h(X) \;=\; E_h(X) \cdot C(X) \bmod G_1(X)
    \end{equation}
    and updating the approximation in $\mathbb{Z}_{p^{h+1}}[X]$:
    \begin{equation} \label{eq:cofactor_free_update}
        G_{h+1}(X) \;=\; G_h(X) \;+\; p^h \cdot \Delta G_h(X) \bmod p^{h+1}
    \end{equation}
    which satisfies $X^n - 1 \equiv 0 \pmod{G_{h+1}(X), \, p^{h+1}}$.
    
    \item[\textup{(3)}] The cached inverse $C(X)$ is independent of the precision depth $h$: it is computed once over $\mathbb{F}_p$ and reused identically at every layer $h = 1, 2, \dots, e-1$.
\end{enumerate}
Consequently, after the one-time precomputation of $C(X)$, each lifting iteration requires only a polynomial multiplication and modular reduction in $\mathbb{F}_p[X]/\langle G_1(X) \rangle$, costing $O(m^2)$ per layer---independent of both the global degree $n$ and the precision depth $e$.
\end{thm}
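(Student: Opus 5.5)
The plan is to follow the classical one-factor Newton step, and to show that the only information about the cofactor that survives modulo $p$ is its residue $H \bmod G_1$. That residue is frozen at layer $1$.

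\textbf{Setting up the layer-$h$ identity.} Write $f(X)=X^n-1$. Since $G_h$ is monic of degree $m$, division in $\mathbb{Z}_{p^{h+1}}[X]$ (or over $\mathbb{Z}$ on lifts) is well defined and gives
\[
f = Q_h G_h + R_h, \qquad \deg R_h < m .
\]
For $m\ge 1$ we have $R_h = (X^n \bmod G_h) - 1$. The hypothesis $f\equiv 0 \pmod{G_h,\,p^h}$ gives $R_h\equiv 0 \pmod{p^h}$, by uniqueness of the remainder modulo a monic polynomial. Hence $R_h = p^h E_h$, with $E_h$ well defined modulo $p$, which is item (1). Because $E_h$ is built from $G_h$, it visibly changes with $h$.

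\textbf{The correction.} Next I substitute the ansatz $G_{h+1}=G_h+p^h\Delta$ with $\deg \Delta<m$. This preserves monicity and gives
\[
f = Q_h G_{h+1} + p^h\bigl(E_h - Q_h \Delta\bigr).
\]
So $f\equiv 0 \pmod{G_{h+1},\,p^{h+1}}$ holds as soon as $E_h - Q_h\Delta \equiv 0 \pmod{G_{h+1},\,p}$.

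Modulo $p$ we have $G_{h+1}\equiv G_1$, so the requirement becomes $E_h \equiv Q_h\Delta \pmod{G_1}$ in $\mathbb{F}_p[X]$. To justify passing from "$\bmod\,(G_{h+1},p^{h+1})$" to this condition, I will check one point: if $T$ is any polynomial, the remainder of $p^hT$ on division by the monic $G_{h+1}$ modulo $p^{h+1}$ is $p^h$ times the remainder of $T$ modulo $(G_1,p)$.

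The key observation is $Q_h \equiv H \pmod p$. Reducing $f=Q_hG_h+R_h$ modulo $p$ gives $f\equiv Q_h G_1$ in $\mathbb{F}_p[X]$, and quotients by the monic $G_1$ are unique. Therefore the condition reads $E_h \equiv H\Delta \pmod{G_1}$. Since $G_1$ is irreducible and $X^n-1$ is squarefree over $\mathbb{F}_p$ (because $\gcd(n,p)=1$), we get $\gcd(G_1,H)=1$, so $C=H^{-1}\bmod G_1$ exists. The condition is then solved uniquely by $\Delta = E_h C \bmod G_1$, which is item (2).

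\textbf{Independence of $h$ and cost.} Item (3) is immediate: $C$ depends only on $G_1$ and $H$ over $\mathbb{F}_p$, both of which are fixed, and never on $G_h$.

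For the cost claim, the correction $E_hC\bmod G_1$ is a product of two degree-$<m$ polynomials over $\mathbb{F}_p$ followed by a reduction, which is $O(m^2)$. I would flag that forming $E_h$ itself requires $X^n \bmod G_h$. Done by square-and-multiply this costs $O(m^2\log n)$. Alternatively, one can update the residue $X^n \bmod G_h$ incrementally, or treat $\log n$ as absorbed. The $O(m^2)$-per-layer statement should therefore be read as the cost of the correction step, with this exponentiation accounted for separately.

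\textbf{Main obstacle.} The main point needing care is the identification $Q_h\equiv H\pmod p$ at every layer, together with the remainder-reduction claim in the second step. These are what make the single cached inverse valid. Both reduce to the uniqueness of division by a monic polynomial over $\mathbb{Z}_{p^k}$, which I would state once as a small lemma.
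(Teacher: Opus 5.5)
Your proof is correct and is essentially the paper's argument: write $X^n-1=Q_hG_h+p^hE_h$, perturb $G_h$ by $p^h\Delta$, use $Q_h\equiv H \pmod p$, reduce modulo $G_1$, and invert $H$. The only differences are that you fold the quotient correction into the condition $E_h-Q_h\Delta\in(G_1,p)$ instead of introducing $\Delta Q$ and expanding, and that you justify sufficiency, $Q_h\equiv H$ and $\gcd(G_1,H)=1$ where the paper states them without argument; your cost caveat is also right, since the paper's $O(m^2)$ counts only $E_hC \bmod G_1$ while forming $E_h$ needs $X^n \bmod G_h$ at every layer, so without an actual incremental scheme (which you only allude to) the per-layer cost is $O(m^2\log n)$.
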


\begin{proof}
Let $G_1(X)$ be a monic irreducible factor of $X^n-1$ over $\mathbb{F}_p$, with cofactor $H(X) = (X^n-1)/G_1(X) \in \mathbb{F}_p[X]$. Suppose that after $h-1$ lifting steps we have obtained a monic polynomial $G_h(X) \in \mathbb{Z}_{p^h}[X]$ satisfying $G_h(X) \equiv G_1(X) \pmod p$ and a quotient $Q_h(X)$ such that $X^n - 1 \equiv Q_h(X) G_h(X) \pmod{p^h}$. We prove that the correction $\Delta G(X)$ required to extend $G_h(X)$ to $\mathbb{Z}_{p^{h+1}}[X]$ is determined solely by objects in $\mathbb{F}_p$.

Performing polynomial division over $\mathbb{Z}_{p^{h+1}}[X]$:
\begin{equation}
    X^n - 1 = Q_h(X) G_h(X) + p^h E(X)
\end{equation}
where the remainder $R(X) = p^h E(X)$ has all coefficients divisible by $p^h$, and $E(X) \equiv R(X)/p^h \pmod p$ is the reduced error polynomial in $\mathbb{F}_p[X]$.

Define the target lifted factors as $G_{h+1}(X) = G_h(X) + p^h \Delta G(X)$ and $Q_{h+1}(X) = Q_h(X) + p^h \Delta Q(X)$. The required congruence is:
\begin{equation} \label{eq:lift_target}
    X^n - 1 \equiv \big(G_h(X) + p^h \Delta G(X)\big)\big(Q_h(X) + p^h \Delta Q(X)\big) \pmod{p^{h+1}}
\end{equation}
Expanding \eqref{eq:lift_target} and observing that the cross term $p^{2h} \Delta G \Delta Q$ vanishes modulo $p^{h+1}$ (since $2h \geq h+1$ for $h \geq 1$):
\begin{equation} \label{eq:lift_expanded}
    X^n - 1 \equiv Q_h(X) G_h(X) + p^h \big(Q_h(X) \Delta G(X) + G_h(X) \Delta Q(X)\big) \pmod{p^{h+1}}
\end{equation}
Substituting $X^n - 1 = Q_h(X) G_h(X) + p^h E(X)$ and cancelling $Q_h G_h$ yields:
\begin{equation}
    p^h E(X) \equiv p^h \big(Q_h(X) \Delta G(X) + G_h(X) \Delta Q(X)\big) \pmod{p^{h+1}}
\end{equation}
Dividing by $p^h$ projects the constraint into $\mathbb{F}_p[X]$. Since $Q_h(X) \equiv H(X) \pmod p$ and $G_h(X) \equiv G_1(X) \pmod p$:
\begin{equation} \label{eq:projected_error}
    E(X) \equiv H(X) \Delta G(X) + G_1(X) \Delta Q(X) \pmod p
\end{equation}
Evaluating \eqref{eq:projected_error} modulo the base seed ideal $G_1(X)$ annihilates the second term:
\begin{equation}
    E(X) \equiv H(X) \Delta G(X) \pmod{G_1(X)}
\end{equation}
Because $\gcd(G_1(X), H(X)) = 1$ in $\mathbb{F}_p[X]$, the cofactor $H(X)$ is invertible in the quotient ring $\mathbb{F}_p[X]/\langle G_1(X) \rangle$, yielding the unique solution:
\begin{equation}
    \Delta G(X) \equiv E(X) \cdot [H(X)]^{-1} \pmod{G_1(X)}
\end{equation}
This derivation holds identically for every layer $h \geq 1$, completing the induction.
\end{proof}

\begin{rem}[On the ``Cofactor-Free'' Terminology]
The term ``cofactor-free'' refers to the \emph{lifting iterations}: once $C(X) = [H(X)]^{-1} \bmod G_1(X)$ is precomputed, the cofactor polynomial $H(X)$ itself (of degree $n - m$) is never accessed again. Each subsequent layer uses only $C(X)$ (of degree $< m$) and the current approximation $G_h(X)$. This is in contrast to classical Hensel lifting, which must maintain and update the full cofactor $H(X)$ at every precision layer.
\end{rem}

\section{Empirical Complexity Evaluation}

To validate the theoretical complexity bounds, we benchmark the Dickson Engine against two widely-used computer algebra systems: SymPy~\cite{sympy} (a pure-Python symbolic mathematics library implementing the Cantor--Zassenhaus algorithm) and SageMath~\cite{sagemath} (a Python interface to the highly optimized C/C++ libraries FLINT and Pari). All three contestants---the Dickson Engine, SymPy, and SageMath---are invoked through the same Python~3.12 runtime (CPython~3.12.10), ensuring a fair comparison of algorithmic efficiency rather than implementation-language advantages. In particular, SageMath's polynomial factorization dispatches to compiled C code (FLINT/Pari), while the Dickson Engine operates entirely in pure Python without any compiled-language backend. The software versions used are: SageMath~10.6, SymPy~1.14.0, and Python~3.12.10. All benchmarks were executed sequentially on an Intel Core Ultra~9 workstation running Ubuntu Linux. To ensure strict statistical rigor, every data point is derived from ten independent executions; the maximum and minimum are discarded, and the reported measurement is the arithmetic mean of the remaining eight runs. The complete open-source implementation is available at \url{https://github.com/Nothing256/Dickson-Engine}.

For the Dickson Engine, we report results under two initialization strategies: the \emph{Auto-Seeder}, which dynamically searches for a valid primitive polynomial seed at runtime via norm validation, and the \emph{Precomputed Seed} strategy, which eliminates initialization overhead by loading a known seed. For a rigorously fair comparison against general-purpose libraries (which compute from scratch), all comparative speedup multipliers highlighted in the text are calculated against the Auto-Seeder's total runtime, fully accounting for seed generation overhead.

\subsection{Base Field Factorization}

\begin{figure}[htbp]
    \centering
    \includegraphics[width=0.85\textwidth]{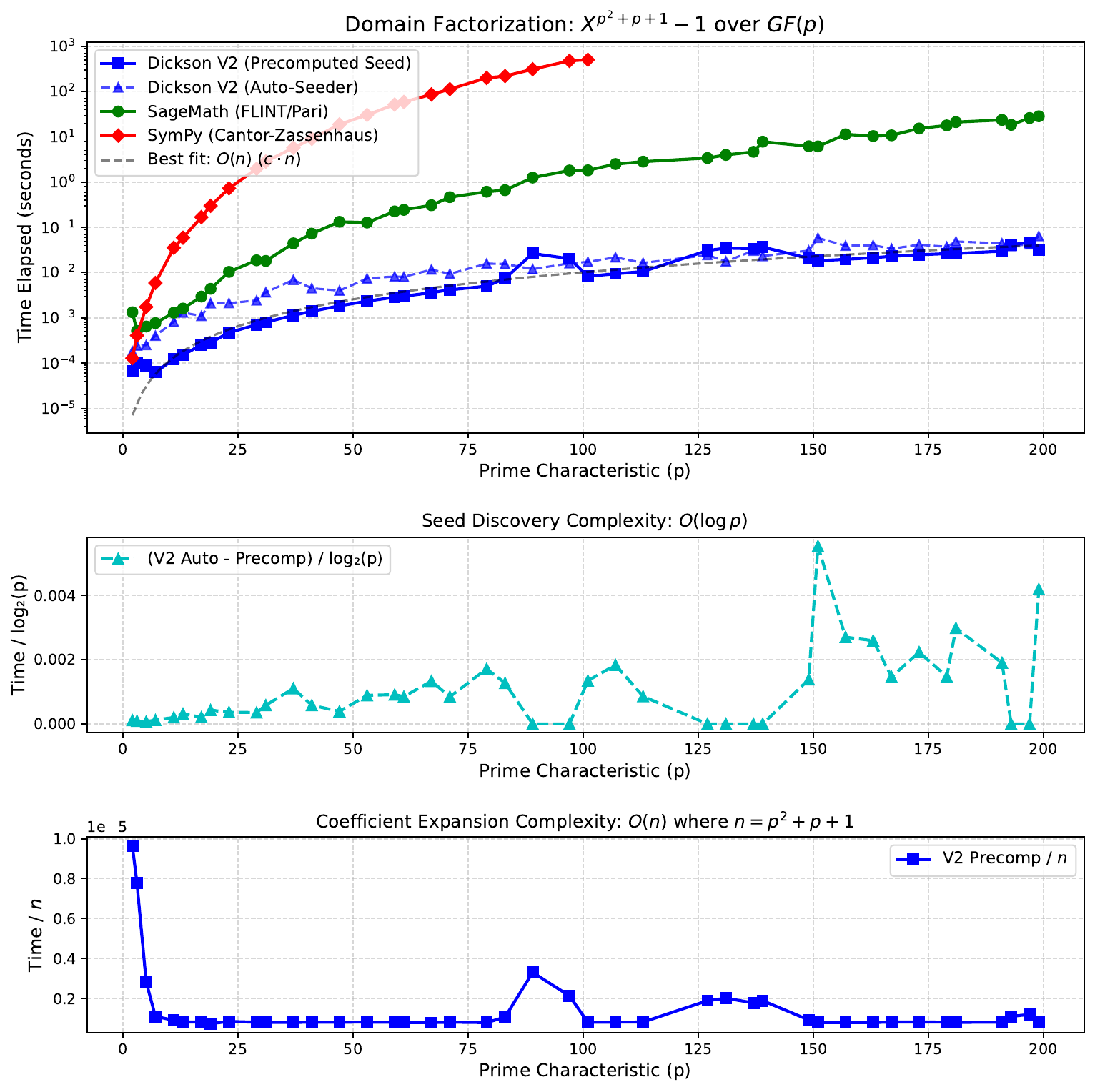}
    \caption{Domain factorization of $X^{p^2+p+1}-1$ over $GF(p)$ for primes $p = 2, \ldots, 199$ (10-iteration trimmed mean). SymPy (pure Python, Cantor--Zassenhaus) exhibits super-polynomial growth, reaching 508\,s at $p=101$ before timing out. SageMath (FLINT/Pari C backend) scales as $O(n^{1.5})$, reaching 28.6\,s at $p=199$. The Dickson V2 engine (pure Python) completes in 0.064\,s at $p=199$ using the Auto-Seeder---a $445\times$ speedup over SageMath (0.032\,s and $890\times$ with a precomputed seed). The lower panels isolate the independent cost components, normalizing the Auto-Seeder overhead by $\log_2 p$ and the coefficient expansion time by $n$, confirming the predicted $O(n + m^3 \log p)$ composite scaling.}
    \label{fig:domain_bench}
\end{figure}

As demonstrated in Figure~\ref{fig:domain_bench}, we benchmark base field factorization of $X^n - 1$ over $GF(p)$ with $n = p^2 + p + 1$, ensuring $\text{ord}_n(p) = 3$ uniformly across all test cases. SymPy's Cantor--Zassenhaus implementation exhibits super-polynomial complexity, rendering it impractical beyond $p \approx 100$ ($n \approx 10{,}000$); at $p = 101$, a single factorization requires 508\,s. SageMath, despite dispatching to the highly optimized C libraries FLINT and Pari, scales as $O(n^{1.5})$ and requires 28.6\,s at $p=199$ ($n = 39{,}801$). The Dickson V2 engine---implemented entirely in pure Python---completes the same factorization in 0.064\,s with the runtime Auto-Seeder (a $445\times$ speedup over SageMath), and in 0.032\,s with a precomputed seed (an $890\times$ speedup). This dramatic advantage demonstrates that the algorithmic paradigm (cofactor-free Hensel lifting with Dickson trace recurrence) dominates implementation-level optimizations: even without any compiled-language backend, the $O(n + m^3 \log p)$ composite complexity of the Dickson framework vastly outperforms the generic polynomial factorization algorithms employed by industrial-strength computer algebra systems. To explicitly verify this composite asymptotic behavior, the middle and lower panels of Figure~\ref{fig:domain_bench} isolate the independent cost components. By normalizing the pure seed discovery overhead by $\log_2 p$ and the coefficient expansion time by $n$, we obtain flat asymptotic trajectories. The visible variance in the normalized seed discovery overhead (middle panel) accurately reflects the probabilistic nature of the Auto-Seeder's randomized primitivity testing. This confirms that the framework's complexity strictly adheres to $O(n + m^3 \log p)$ (with $m=3$ here), rendering it practically immune to the exponential explosions that trap traditional algorithms. Furthermore, the raw execution time of the coefficient expansion phase (precomputed seed) tightly tracks the least-squares quadratic fit $O(n)$ displayed in the top panel, visually reinforcing its deterministic linear dependence on the number of generated factors.

\subsection{Ring Factorization}

\begin{figure}[htbp]
    \centering
    \includegraphics[width=0.85\textwidth]{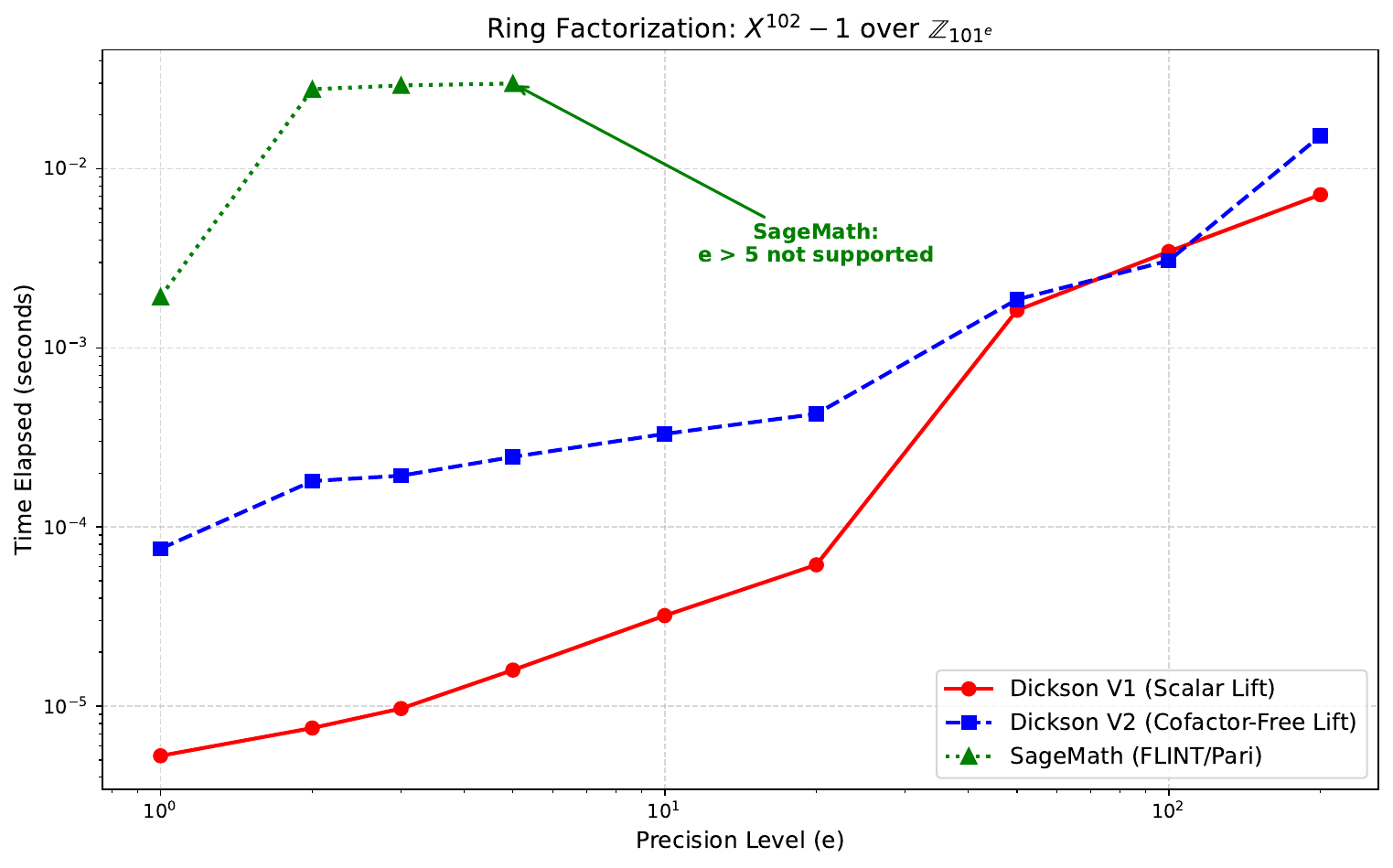}
    \caption{Ring factorization of $X^{102}-1$ over $\mathbb{Z}_{101^e}$ for precision depths $e = 1, \ldots, 200$ (10-iteration trimmed mean). SageMath raises \texttt{NotImplementedError} for $e > 5$, confirming that no general-purpose CAS currently supports polynomial factorization over $\mathbb{Z}_{p^e}$ with composite modulus. Both Dickson V1 (scalar lift) and V2 (cofactor-free lift) employ the Auto-Seeder and complete all precision levels; their comparable performance at $p=101$ reflects the moderate prime size where the asymptotic gap $O(e \cdot p)$ vs.\ $O(e \cdot m^2)$ has not yet fully separated.}
    \label{fig:ring_medium}
\end{figure}

\begin{figure}[htbp]
    \centering
    \includegraphics[width=0.85\textwidth]{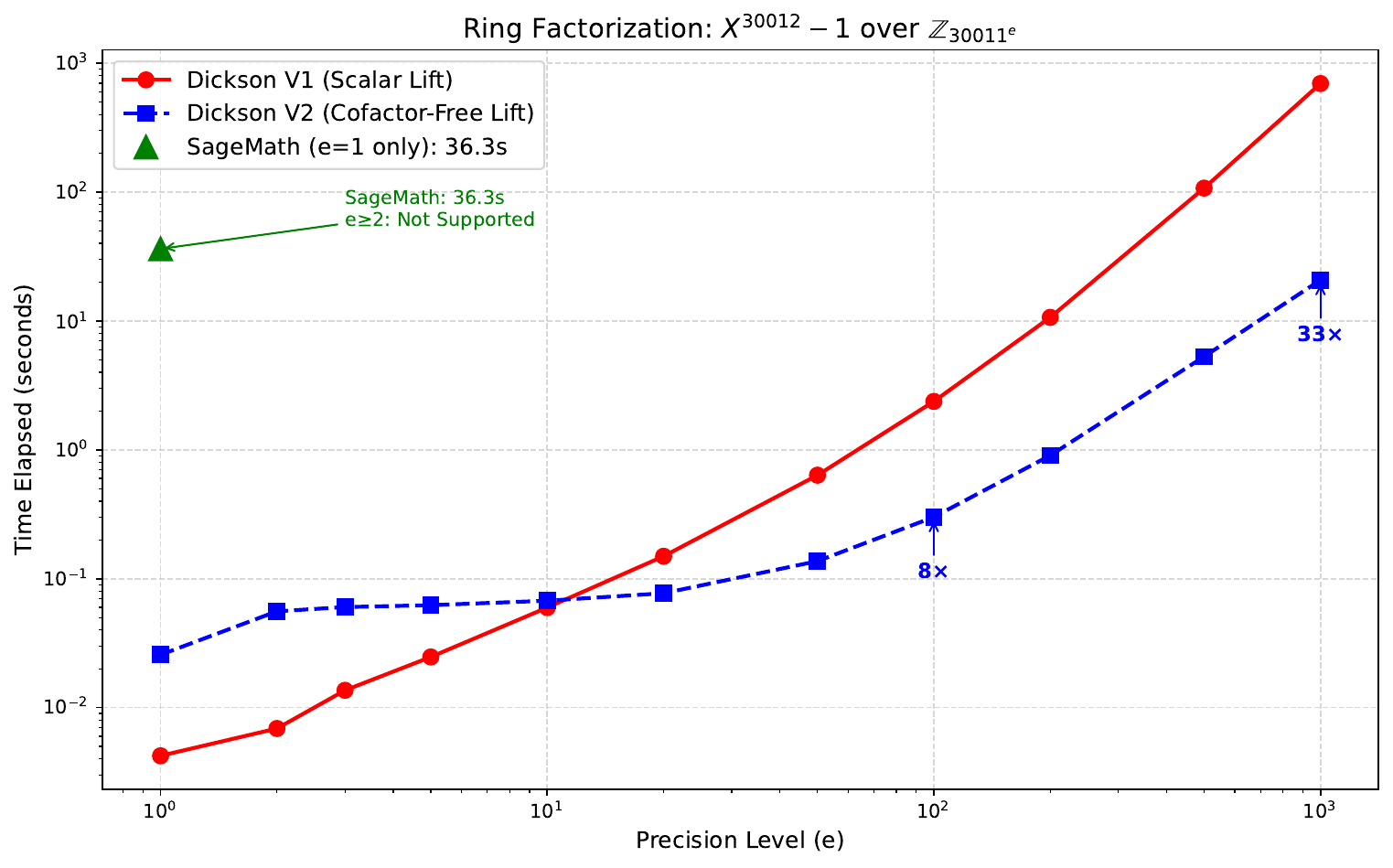}
    \caption{Ring factorization of $X^{30012}-1$ over $\mathbb{Z}_{30011^e}$ for precision depths $e = 1, \ldots, 1000$ (10-iteration trimmed mean). The V2 cofactor-free lift (using a precomputed seed to strictly isolate lifting performance) achieves a $33.5\times$ speedup over V1 at $e=1000$, confirming the $O(e \cdot m^2)$ vs.\ $O(e \cdot p)$ asymptotic separation. SageMath, shown at $e=1$ only (36.3\,s), raises \texttt{NotImplementedError} for all $e \geq 2$.}
    \label{fig:ring_large}
\end{figure}

Figures~\ref{fig:ring_medium} and~\ref{fig:ring_large} evaluate ring-level factorization over $\mathbb{Z}_{p^e}$, the regime uniquely addressed by our framework. Figure~\ref{fig:ring_medium} reveals that SageMath raises a \texttt{NotImplementedError} (``factorization of polynomials over rings with composite characteristic is not implemented'') for all $e > 5$, confirming that no general-purpose computer algebra system currently supports polynomial factorization over $\mathbb{Z}_{p^e}$ with composite modulus---precisely the regime addressed by the cofactor-free Hensel lift of Theorem~\ref{thm:cofactor_free}.

Figure~\ref{fig:ring_large} demonstrates the asymptotic separation between the V1 scalar lift (total $O(e \cdot p)$ algebraic operations) and the V2 cofactor-free lift (total $O(e \cdot m^2)$ algebraic operations) at full scale. At $p = 30{,}011$ and $e = 1000$, the V2 engine completes in 20.8\,s compared to 695.5\,s for V1---a $33.5\times$ speedup that validates the theoretical complexity bounds. SageMath, at $e=1$ alone, requires 36.3\,s for the same factorization (compared to V2's 0.026\,s), and is entirely unable to proceed beyond $e = 1$.

\section{Discussion}

As established in Section~1.1, the Ideal Derivation Modulo Principle formally subsumes the prior explicit factorization results of Fitzgerald--Yucas \cite{Fitzgerald2007} and Wang--Wang \cite{Wang2012}, which operated on individual cyclotomic polynomials $\Phi_d(X)$ over finite fields $\mathbb{F}_q$. Beyond reproducing these prior results as special cases, the present framework extends them in two fundamental directions: (i)~the Cofactor-Free Hensel Lift (Theorem~\ref{thm:cofactor_free}) enables precision elevation to $\mathbb{Z}_{p^e}$, which was unavailable in the purely finite-field setting; and (ii)~the MED-based coset dispatch (Section~4) and dual-track coefficient reconstruction (Proposition~\ref{prop:dual_track}) handle the complete factorization of $X^n - 1$ (encompassing all constituent $\Phi_d(X)$ simultaneously) for arbitrary $n$ with mixed-degree decompositions, removing the case-by-case analysis required by earlier approaches.

\section{Conclusion}

This paper establishes a complete, algorithmically explicit framework for factoring $X^n-1$ over $\mathbb{Z}_{p^e}$ for arbitrary coprime $(n, p)$. The Ideal Derivation Modulo Principle (Theorem~\ref{thm:ideal_modulo}) provides a unified characterization of all factor coefficients as roots of multivariate Dickson polynomial ideals, generalizing prior finite-field results \cite{Fitzgerald2007, Wang2012} from fixed dimensions to arbitrary cyclotomic orders. The cofactor-free Hensel lift (Theorem~\ref{thm:cofactor_free}) reduces ring-level factorization to a single seed elevation with a cached polynomial inverse, achieving $O(m^2)$ cost per precision layer independent of $e$. The dual-track coefficient reconstruction (Proposition~\ref{prop:dual_track}) ensures unconditional correctness across all characteristic regimes by combining Newton--Girard inversion ($p > m$) with quotient-ring Gaussian elimination ($p \leq m$). Ultimately, the separation of seed discovery, single-seed lifting, and independent coefficient expansion yields a grand total algebraic complexity of $O(n + m^3 \log p + e \cdot m^2)$ for the entire factorization process, guaranteeing computational immunity against the exponential scaling laws that bottleneck classical multivariate inversion.

The present framework requires the unramified condition $\gcd(n, p) = 1$. When $p \mid n$, the polynomial $X^n-1$ acquires repeated roots over $\mathbb{F}_p$, invalidating the coprime factorization underlying the Hensel lift. Extending the Dickson-MED architecture to this ramified regime constitutes a natural direction for future work.

\appendix
\section{Worked Example: $X^7-1$ over $\mathbb{Z}_8$}
\label{app:worked_example}

We trace Algorithm~\ref{alg:main} step by step for $n=7, p=2, e=3$.

\textbf{Stage 1: Seed Discovery.}
We seek a single irreducible factor of $X^7-1$ over $\mathbb{F}_2$ of degree $m = \text{ord}_7(2) = 3$. Testing $G_1(X) = X^3 + X + 1$: we verify $X^7 \equiv 1 \pmod{G_1, 2}$ and $X^1 \not\equiv 1 \pmod{G_1, 2}$ (the only prime divisor of $7$ is $7$ itself, and $7/7=1$). The seed passes.

\textbf{Stage 2: Single-Seed Hensel Lift.}
We compute $H(X) = (X^7-1)/G_1(X) = X^4 + X^3 + X^2 + 1$ over $\mathbb{F}_2$, and cache the inverse $C(X) = [H(X)]^{-1} \bmod G_1(X)$ over $\mathbb{F}_2$ once.

\emph{Layer $h=1$ ($\bmod\,4$):} Compute $X^7 \bmod G_{lift}(X)$ over $\mathbb{Z}_4$, extract the error $E(X)$, and update: $G_{lift}(X) = X^3 + 2X^2 + X + 3$.

\emph{Layer $h=2$ ($\bmod\,8$):} Repeat over $\mathbb{Z}_8$: $G_{lift}(X) = X^3 + 6X^2 + 5X + 7$.

The coefficient evolution of the single seed across precision layers:
\begin{align*}
    h=1 \ (\bmod\ 2): \quad & (1, 0, 1, 1) \\
    h=2 \ (\bmod\ 4): \quad & (1, 2, 1, 3) \\
    h=3 \ (\bmod\ 8): \quad & (1, 6, 5, 7) 
\end{align*}

\textbf{Stage 3: Trace Generation.}
From $G_{lift}(X) = X^3 + 6X^2 + 5X + 7$ over $\mathbb{Z}_8$, we extract the elementary symmetric polynomials: $A_1 = 2, A_2 = 5, A_3 = 1$ (after sign extraction). The Newton-Girard recurrence generates the global trace array:
\begin{equation*}
    S_1 = 2, \quad S_2 = 2, \quad S_3 = 5, \quad S_4 = 2, \quad S_5 = 5, \quad S_6 = 5
\end{equation*}

\textbf{Stage 4: Factor Reconstruction.}
Since $p = 2 \leq m = 3$, the Newton-Girard inversion encounters a zero-divisor ($k=2$ requires $2^{-1} \bmod 8$, which does not exist). The algorithm triggers the \emph{fallback track}: for coset representative $s = 3$, it computes $\beta = X^3 \bmod G_{lift}(X)$ and its powers in the quotient ring $\mathbb{Z}_8[X]/\langle G_{lift} \rangle$, then recovers the minimal polynomial via Gaussian elimination, yielding $X^3 + 3X^2 + 2X + 7$.

The trivial factor $(X+7) \equiv (X-1) \pmod{8}$ corresponds to the fixed coset $\{0\}$. Assembling all factors:
\begin{equation}
    X^7-1 \equiv (X+7)(X^3+6X^2+5X+7)(X^3+3X^2+2X+7) \pmod 8
\end{equation}
This demonstrates the complete pipeline: a single seed lifted once, a global trace array generated from it, and all remaining factors reconstructed without additional Hensel lifts.

\section{Derivation of the Remainder Coefficients}
\label{app:dickson_derivation}

We derive the explicit forms of $V_1(S, P)$ and $V_0(S, P)$ from the polynomial division of $X^n - 1$ by $G(S, P, X) = X^2 - SX + P$.

In the quotient ring $\mathbb{F}_p[X] / \langle X^2 - SX + P \rangle$, every element reduces to a linear polynomial $aX + b$. For each power $X^k$, we write:
\begin{equation}
    X^k \equiv a_k X + b_k \pmod{X^2 - SX + P}
\end{equation}
with initial conditions $X^0 = 1$ (giving $a_0 = 0, b_0 = 1$) and $X^1 = X$ (giving $a_1 = 1, b_1 = 0$). Applying the reduction $X^2 \equiv SX - P$ yields the recurrence:
\begin{align}
    a_k &= S \cdot a_{k-1} - P \cdot a_{k-2} \\
    b_k &= S \cdot b_{k-1} - P \cdot b_{k-2}
\end{align}

\textbf{Identifying $a_k$.} The recurrence $a_k = S a_{k-1} - P a_{k-2}$ with $a_0 = 0, a_1 = 1$ is precisely the defining recurrence of the Dickson polynomial of the second kind $E_k(S, P)$ (Definition~\ref{def:dickson_poly}). Therefore $a_k = E_{k-1}(S, P)$ for all $k \geq 1$.

\textbf{Identifying $b_k$.} The same recurrence with $b_0 = 1, b_1 = 0$ yields $b_2 = -P, b_3 = -PS, b_4 = -P(S^2 - P), \dots$ By induction, $b_k = -P \cdot E_{k-2}(S, P)$ for $k \geq 2$.

\textbf{Assembling the remainder.} Since $X^n - 1 \equiv a_n X + (b_n - 1)$, we obtain:
\begin{align}
    V_1(S, P) &= a_n = E_{n-1}(S, P) \\
    V_0(S, P) &= b_n - 1 = -P \cdot E_{n-2}(S, P) - 1
\end{align}
This establishes that both remainder coefficients are expressible in closed form via the Dickson recurrence.

\bibliographystyle{ieeetr}
\bibliography{ref}

\end{document}